\newtheorem{lem}{Lemma}
\newtheorem{thm}{Theorem}
\newtheorem{rem}{Remark}
\newtheorem{corol}{Corollary}
\newtheorem{mydef}{Definition}
\tikzset{
  treenode/.style = {align=center, inner sep=0pt, text centered,
    font=\sffamily},
  arn_r/.style = {treenode, circle, black, draw=black, 
    text width=1.5em, very thick},
}
\begin{document}
\title{Secure and Private Cloud Storage Systems with Random Linear Fountain Codes}
\author{Mohsen~Karimzadeh~Kiskani$^{\dag}$ and
         Hamid~R.~Sadjadpour$^{\dag}$
\thanks{M. K. Kiskani$^{\dag}$ and H. R. Sadjadpour$^{\dag}$ 
are with the Department of Electrical Engineering, University of California, Santa Cruz. Email: 
\{mohsen, hamid\}@soe.ucsc.edu}}
\maketitle \thispagestyle{empty}

\begin{abstract}
 An information theoretic approach to security and privacy    called Secure And Private Information Retrieval (SAPIR) is introduced. SAPIR is applied to distributed data storage systems. In this approach, random combinations of all contents are stored across the network. Our coding approach is based on Random Linear Fountain (RLF) codes.
To retrieve a content, a group of servers collaborate with each other to form a Reconstruction Group (RG). SAPIR achieves asymptotic perfect secrecy if at least one of the servers within an RG is not compromised. Further, a Private Information Retrieval (PIR) scheme based on random queries is proposed. The PIR approach
ensures the users  privately download their desired contents without the servers knowing about the requested contents indices. The proposed scheme is adaptive and can provide privacy against a significant number of colluding servers. 
\end{abstract}

\begin{IEEEkeywords}
Cloud Storage, Security, Private Information Retrieval
\end{IEEEkeywords}

\IEEEpeerreviewmaketitle

\section{Introduction}

Cloud networks have become a popular platform for data storage during the past decade. Cloud systems have been used in different applications  such as  healthcare \cite{hosseini2016cloud}. Security of the stored data has always been a major concern for many cloud service providers. Many cloud service providers use encryption algorithms to encrypt the data on their servers. Dropbox, for instance,  is using Advanced Encryption Standard (AES) to store the contents on its servers\footnote{\url{https://www.dropbox.com/en/help/27}}. Since the encryption algorithms are {\em computationally secure}, an adversary may be able to break them with time. For instance, Data Encryption Standard (DES)
which was once the official Federal Information Processing Standard (FIPS) in US is not considered secure anymore. An interesting problem in highly sensitive cloud services would then be to design {\em information theoretic secure} solutions which are immune to attackers in time. 

To achieve perfect information theoretic secrecy using Shannon cipher system \cite{shannon1949communication}, the number of keys should be equal to the number of messages. Therefore, to retrieve the contents from the cloud using an information theoretically secure approach in which the contents are directly encoded with a different key, each user needs to store a huge number of keys which is not practical. In this paper, we propose to use the storage capability of the  trusted servers to generate the keys by using the contents themselves and achieve asymptotic perfect secrecy. Our proposed technique is based on Random Linear Fountain (RLF) codes \cite{mackay2005fountain}. RLF codes have been shown \cite{DBLP:conf/infocom/KiskaniS16, kiskani2017secure,DBLP:journals/twc/KiskaniS17} to be very useful in distributed storage systems.

On the other hand, in many distributed storage applications like Peer-to-Peer (P2P) distributed storage systems or distributed storage systems in which some of the servers are under the control of an oppressive government, a user wants to download a content in a way that the servers cannot determine which content is requested by the user. This is widely known as Private Information Retrieval (PIR) problem. 

Our next contribution in this paper is a novel technique to address the PIR problem in distributed storage systems. Users use random queries to request data from the servers. These random queries are designed in a way that they can be used to retrieve any desired content while preventing any malicious agent with the knowledge of up to half of the random queries to gain information about the requested content. This is an important feature of the proposed technique that provides privacy in the presence of many colluding servers. Such a feature has not been presented in prior information theoretic PIR  approaches \cite{tajeddine2016private} for coded storage systems. {The proposed} Secure And Private Information Retrieval (SAPIR) scheme provides both security and privacy for information retrieval.


The rest of the paper is organized as follows. Section \ref{sec_related} is dedicated to the related work on PIR and security in distributed storage systems.  The assumptions and problem formulation are described in section \ref{sec_problem}. We study the security and PIR aspects of SAPIR in sections \ref{sec_security} and \ref{sec_pir}, respectively.
The simulation results are provided in  section \ref{sec_sim} and the paper is concluded in section \ref{sec_conc}. 

\section{Related Works}
\label{sec_related}

In this paper, we use {\em Random Linear Fountain (RLF) codes} \cite{mackay2005fountain} to encode the contents within the servers in the network. Significant capacity improvement can be achieved in wireless ad hoc and cellular networks \cite{DBLP:conf/infocom/KiskaniS16,kiskani2017secure,DBLP:journals/twc/KiskaniS17, Saeed.Vahidian:IET:2014, hatamnia2017network, kiskani2015application,kiskani2013social, kiskani2016effect,kiskani2017asecure, kiskani2017multihop, kiskani2015opportunistic, mousaei2017optimizing} using RLF codes. The application of fountain codes in distributed storage systems was previously studied in  \cite{DBLP:conf/icassp/DimakisPR06}. Similar coding techniques has been previously used in references like \cite{yazdani2017oncognitive, sani2016distributed, sani2015quantizer, hajizadeh2012asymmetric, kiskani2010delay, kiskani2011novel, kiskani2016location, kiskani2015recent} to provide quality of service in wireless systems.

The capacity of wireless ad hoc cached networks was studied in \cite{kiskani2017secure} and it was shown that RLF codes can achieve perfect secrecy asymptotically without considering the PIR problem. In the current paper, RLF codes are used to simultaneously achieve  security and privacy in distributed cloud storage systems.

While MDS codes \cite{DBLP:journals/tit/DimakisGWWR10,DBLP:journals/pieee/DimakisRWS11} show good repair capability, these codes are not particularly designed to provide security. Authors in \cite{DBLP:conf/isit/DikaliotisDH10} have studied the security of distributed storage systems with MDS codes and \cite{DBLP:journals/icl/KumarRA16} has proposed a construction for repairable and secure fountain codes. {Reference \cite{DBLP:journals/icl/KumarRA16} achieves security by concatenating Gabidulin codes with Repairable Fountain Codes (RFC). Their specific design allows to use Locally Repairable Fountain Codes for secure repair of the lost data.} Unlike \cite{DBLP:journals/icl/KumarRA16} which has focused on the {security of the repair links} using concatenated codes, the current paper presents  simultaneous security and privacy {of the data storage nodes} by only using RLF codes. References \cite{Vahidian:WCL:2015} and \cite{saeedtvt2017} have studied the problem of security in the presence of overhearing interference in cooperative communications. Further \cite{naghshin2015capacity, naghshin2017coverage} studied the same problem on multi-tier networks. 

The authors in \cite{cheng2016numerical} have numerically studied the wiretap network with a simple topology in which there is a relaying node between the transmitter and the receiver. In the current paper, we considered the general network with a cloud infrastructure in which the servers are cooperating to reconstruct the contents. 

The idea of PIR was originally introduced in \cite{DBLP:conf/focs/ChorGKS95} for uncoded databases. Recently, there has been a renewed interest in studying PIR for storage systems utilizing different coding techniques. Reference \cite{DBLP:conf/isit/ShahRR14} was among the first references to study the problem of PIR for coded storage systems. They proved that with only one extra bit, PIR can be achieved. However, the solution in \cite{DBLP:conf/isit/ShahRR14} requires that the number of servers grows with the data record size. Reference \cite{DBLP:conf/isit/ChanHY15} assumed that the number of servers is fixed and established the trade-off between storage and retrieval costs and demonstrated the fundamental limits on the cost of PIR for coded storage systems. The authors in \cite{tajeddine2016private}  studied the problem of PIR for MDS coded storage systems and introduced a scheme to achieve PIR in MDS coded databases but the security aspect was not addressed in that paper.  
They have also assumed that the databases are able to store all the contents which may not be a realistic assumption. Unlike prior work \cite{DBLP:conf/isit/ShahRR14,DBLP:conf/isit/ChanHY15,tajeddine2016private} which have only studied PIR for coded databases, we are interested in achieving simultaneous security and PIR. Further, as far as we know, this is the first work to study the problem of PIR for a fountain coded-based distributed storage system. The proposed PIR scheme is easily scalable to the cases when up to half of the servers are colluding to obtain information about the content or content index which makes this technique very robust against large number of colluding servers. 

\section{Problem Formulation}
\label{sec_problem}
The network is composed of $n$ servers each capable of storing $h$ contents. These servers are denoted by $\mathcal{N}_1,\mathcal{N}_2,\dots,\mathcal{N}_n$. A total number of $m$ contents exist within the network and each content has $M$ bits, i.e., $f_1, f_2, \dots, f_m$.

\subsection{RLF Coding-Based Storage}
The contents are randomly encoded and stored on the servers during the data preloading phase. The encoded file in the $j^{th}$ storage location of the $i^{th}$ server for any $i=1,2,\dots,n$ and $j=1,2,\dots,h$ will have the form
 \begin{align}
  c_j^i = \sum_{k=0}^m v_k^{i,j}  f_k = \mathbf{f} \mathbf{v}_j^i,  
 \end{align}
 where\footnote{Throughout the paper, the vectors are denoted in bold characters.} $\mathbf{f} = [f_1 ~ f_2 ~ \dots f_m]$ denotes the $1 \times m$ vector of all contents and $\mathbf{v}_j^i$ denotes an $m \times 1$ random encoding vector of  $0$'s and $1$'s. Each content $f_i$ belongs to the Galois Field $\mathbb{F}_{2^{M}}$, i.e. $\mathbf{f} \in \mathbb{F}_{2^M}^m$. Throughout the paper, unless otherwise stated we assume that all the vector and matrix operations are  in $\mathbb{F}_2$.  The encoded files stored in server $\mathcal{N}_i$ are $\mathbf{c}_i = [c_1^i~ c_2^i~ \dots c_h^i]$ where $\mathbf{c}_i \in \mathbb{F}_{2^M}^h$.
Note that $\mathbf{c}_i = \mathbf{f} \mathbf{V}_i$ where $\mathbf{V}_i$ is the $m \times h$ random encoding matrix for server $\mathcal{N}_i$.

In RLF all random vectors $\mathbf{v}_j^i$ are chosen independently and uniformly from $\mathbb{F}_2^m$ which results in a random uniform choice of the encoding matrix $\mathbf{V}_i$ where each element can be either $0$ or $1$ with equal probability. Such an encoding matrix may not necessarily be full rank and may contain linearly dependent rows. This will result in redundant use of storage and may jeopardize the security by revealing more information. Hence, we propose a {\em full rank encoding} scheme based on RLF codes in which randomly created encoding vectors $\mathbf{v}_j^i$ are discarded if they already exist in the span of the previously selected random encoding vectors. In other words, for each server we select $h$ linearly independent vectors to construct a full rank matrix $\mathbf{V}_i$ of size $m \times h$ for $i=1,2,\dots,n$. 
 
The encoding can be performed in a decentralized way. This means that each server can fill up its storage space independently of all the other servers during the data preloading phase. It can be shown \cite{DBLP:journals/twc/KiskaniS17} that the average minimum number of encoded files required to decode any desired content is very close to the optimal value of $m$.

\subsection{Reconstruction Groups (RG)}
 After the data preloading phase, users can reconstruct their desired contents during content delivery phase. A desired file $f_r$ can be written as $f_r =  \mathbf{f} \mathbf{e}_r$, where $\mathbf{e}_r$ is an all zero vector of size $m \times 1$ except in the $r^{th}$ location is equal to 1. To retrieve $f_r$, the user needs to access enough encoded files on the network servers in order to construct $\mathbf{e}_r$ via ${\mathbf v}_j^i$'s.

Since codes are constructed in $\mathbb{F}_2^m$, users need $m$ linearly independent encoding vectors to retrieve any of the $m$ contents.
We assume that servers are divided into many different {\em RGs }. Servers within each RG collaborate with each other to retrieve any requested content. Therefore, the number of encoded files within a single RG should be at least equal to $m$. The RGs are represented by $\mathcal{J}_1,\mathcal{J}_2,\dots,\mathcal{J}_u$ and the number of servers within their corresponding RGs by $J_1,J_2,\dots,J_u$ where, $\sum_{i=1}^u J_i = n$. It is shown in \cite{DBLP:journals/twc/KiskaniS17} that the average minimum number of encoded files within each RG to retrieve all the contents is only slightly larger than $m$. Therefore, for each RG $\mathcal{J}_i$ where $1 \le i \le u$, the minimum value of $J_i$ is only slightly larger than $\frac{m}{h}$. Notice that if $J_i$ is smaller than $\frac{m}{h}$, then the servers will not be able to form a full rank matrix to retrieve all desired contents. In the case that storage systems store uncoded contents, we need exactly $m$ cache locations for storing files which is very close to our RLF technique and demonstrates that our RLF-coding based approach efficiently utilizes storage space. For large values of $h$, i.e. $h \ge m$, each server can become an RG by itself.  

\subsection{Content Retrieval}
 Each RG $\mathcal{J}_k$  stores $J_k h \ge m$ randomly encoded files. The matrices $\mathbf{V}_i$ of the $J_k$ servers in the RG form a full rank matrix $\mathbf{V} = [\mathbf{V}_1 ~\mathbf{V}_2~ \dots ~\mathbf{V}_{J_k}]_{m \times J_k h}$. Therefore, any content with index $r$ can be retrieved from the servers  by solving the linear equation $\mathbf{V} \mathbf{y}_r = \mathbf{e}_r$ in $\mathbb{F}_2$. Since this matrix is full rank, one possible solution can be given as 
 \begin{align}
 \mathbf{y}_r = \mathbf{V}^T \left( \mathbf{V} \mathbf{V}^T \right)^{-1} \mathbf{e}_r.
 \label{eqs_simple_response}
 \end{align}
 To solve $\mathbf{V} \mathbf{y}_r = \mathbf{e}_r$, servers within the RG should send their corresponding encoding matrices $\mathbf{V}_i$ to one of the RG servers called $\mathcal{N}_s$ that generates 
  $\mathbf{V}$ and computes $\mathbf{y}_r$ from the above equation\footnote{Notice that the servers of an RG only need to send this information to $\mathcal{N}_s$ once. This could be done even right after the data preloading phase.}. If $\mathbf{y}_r = [\mathbf{y}_r^1~\mathbf{y}_r^2~\dots~\mathbf{y}_r^{J_k}]^T$ is such a solution, where $\mathbf{y}_r^i$ is a $h \times 1$ local decoding vector for server $\mathcal{N}_i$, then server $\mathcal{N}_s$ sends $\mathbf{y}_r^i$ to server $\mathcal{N}_i$ and $\mathcal{N}_i$ then transmits $\mathbf{f} \mathbf{V}_i \mathbf{y}_r^i$ to the requesting user. All of the server responses are then aggregated by the user to retrieve $f_r$ as
 \begin{align}
  f_r = \mathbf{f} \mathbf{e}_r = \mathbf{f} \mathbf{V} \mathbf{y}_r= \sum_{i=1}^{J_k} \mathbf{f} \mathbf{V}_i \mathbf{y}_r^i. 
  \label{eqs_f_r_0}
 \end{align}
 However, this solution reveals the identity of the downloaded content to all the servers of the RG. This simple solution cannot be used for PIR but  we will show in section \ref{sec_security} that perfect secrecy can be achieved with this solution. A solution to preserve the privacy of the users is presented in section \ref{sec_pir}.
 
 \section{Security}
 \label{sec_security}
 This section is dedicated to the study of security of our approach. 
 If an adversary is able to wiretap all of the communication links between the RG servers and the user, it can perfectly retrieve $f_r$ using equation \eqref{eqs_f_r_0}.  We prove that perfect communication secrecy can be achieved when the adversary can wiretap all communication links between servers and user except one. 
 We will prove this for the case when the user directly sends the request $\mathbf{e}_r$ to the servers and the servers respond accordingly. Under this scenario, the adversary knows the requested content index but still unable to reduce its equivocation about the requested content.
 
Consider RG $\mathcal{J}_k$ and without loss of generality, assume that an adversary can wiretap all of the links between servers $\mathcal{N}_1,\mathcal{N}_2,\dots,\mathcal{N}_{J_k-1}$ and the user. Further assume that the user wants to directly download the content $f_r$ from these servers by sending the query $\mathbf{e}_r$ to all these servers. Such a scenario is much more vulnerable to adversarial attacks compared to a scenario in which the requested base vectors are expanded in terms of random queries in order to guarantee privacy. When the query $\mathbf{e}_r$ is received by all the servers,they will collectively solve the linear equation $\mathbf{V} \mathbf{y}_r = \mathbf{e}_r$ to find the decoding vector $\mathbf{y}_r$. Equation \eqref{eqs_f_r_0} can be rewritten as 
 \begin{align}
  f_r = \mathbf{f} \mathbf{e}_r = \mathbf{f} \mathbf{V} \mathbf{y}_r= \sum_{i=1}^{J_k-1} \mathbf{f} \mathbf{V}_i \mathbf{y}_r^i + \mathbf{f} \mathbf{V}_{J_k} \mathbf{y}_r^{J_k}.
  \label{eqs_security_f_r}
 \end{align}
 Since we assume that all of the responses from the servers $\mathcal{N}_1,\mathcal{N}_2,\dots,\mathcal{N}_{J_k-1}$ can be wiretapped, we can assume that the first part of the above equation is known while the second part is secret to the adversary. Lets define  
 $S_r \triangleq \sum_{i=1}^{J_k-1} \mathbf{f} \mathbf{V}_i \mathbf{y}_r^i$ and
 $T_r \triangleq \mathbf{f} \mathbf{V}_{J_k} \mathbf{y}_r^{J_k}. $
 The requested content can be written as $f_r = S_r + T_r$ and since all  operations are in $\mathbb{F}_2$, we have
 \begin{align}
  S_r = f_r + T_r.
  \label{eqs_security_eq1}
 \end{align}
 This is similar to the Shannon cipher system \cite{shannon1949communication} in which an encoding function $\mathfrak{e}: \mathbb{M} \times \mathbb{K} \to \mathbb{C}$ is mapping a message $\mathfrak{M} \in \mathbb{M}$ and a key $\mathfrak{K} \in \mathbb{K}$ to a codeword $\mathfrak{C} \in  \mathbb{C}$. In our problem $f_r$, $T_r$, and $S_r$ can be regarded as the message, key, and codeword respectively. 
 The eavesdropper knows the encoded file $S_r$ but it cannot obtain any information about the message $f_r$ if a unique key $T_r$ with uniform distribution is used for each message. 
 
 The following theorem provides the necessary and sufficient condition \cite{bloch2011physical} to obtain perfect  secrecy.
\begin{thm}{\em 
 If $|\mathbb{M}|=|\mathbb{K}|=|\mathbb{C}|$, a coding scheme achieves perfect secrecy if and only if 
 \begin{itemize}
  \item For each pair $(\mathfrak{M}, \mathfrak{C}) 
  \in (\mathbb{M} \times \mathbb{C})$, there exists a unique key 
  $\mathfrak{K} \in \mathbb{K}$ such that $\mathfrak{C} = 
  \mathfrak{e}(\mathfrak{M},\mathfrak{K})$.
  \item The key $\mathfrak{K}$ is uniformly distributed in $\mathbb{K}$.
 \end{itemize}
 }\label{thm_shannon}
\end{thm}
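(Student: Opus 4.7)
The plan is to reduce perfect secrecy to the equivalent condition $\Pr(\mathfrak{C}=c\mid\mathfrak{M}=m)=\Pr(\mathfrak{C}=c)$ for every positive-probability pair $(m,c)\in\mathbb{M}\times\mathbb{C}$, and then handle the two implications separately. Throughout, I would use that the message $\mathfrak{M}$ and key $\mathfrak{K}$ are independent and that for each key $k$ the encoder $\mathfrak{e}(\cdot,k)$ is injective so that the scheme is decipherable.

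For the sufficiency direction, I would assume both conditions (1) and (2). For each $(m,c)$, let $k(m,c)$ be the unique key guaranteed by (1). Then
\begin{equation*}
\Pr(\mathfrak{C}=c\mid\mathfrak{M}=m)=\Pr(\mathfrak{K}=k(m,c))=\frac{1}{|\mathbb{K}|},
\end{equation*}
where the second equality invokes (2). Since the right-hand side does not depend on $m$, marginalizing over $m$ gives $\Pr(\mathfrak{C}=c)=1/|\mathbb{K}|$ as well, so $\Pr(\mathfrak{C}=c\mid\mathfrak{M}=m)=\Pr(\mathfrak{C}=c)$ and perfect secrecy follows via Bayes' rule.

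For necessity, I would first extract (1). Fixing $m$ and considering the map $\phi_m:\mathbb{K}\to\mathbb{C}$ defined by $\phi_m(k)=\mathfrak{e}(m,k)$, perfect secrecy forces $\Pr(\mathfrak{C}=c\mid\mathfrak{M}=m)=\Pr(\mathfrak{C}=c)>0$ for every $c$ in the support, so $\phi_m$ is surjective onto that support; restricting $\mathbb{C}$ to its support and using $|\mathbb{K}|=|\mathbb{C}|$ promotes $\phi_m$ to a bijection, which is exactly (1). To obtain (2), I would fix an arbitrary $c\in\mathbb{C}$ and let $k_m$ denote the unique key with $\mathfrak{e}(m,k_m)=c$. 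Decipherability makes $m\mapsto k_m$ injective, and $|\mathbb{M}|=|\mathbb{K}|$ upgrades it to a bijection onto $\mathbb{K}$. Since $\Pr(\mathfrak{C}=c\mid\mathfrak{M}=m)=\Pr(\mathfrak{K}=k_m)$ must take the same value for every $m$ by perfect secrecy, every element of $\mathbb{K}$ carries the same probability, so $\mathfrak{K}$ is uniform.

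The only step I expect to require real care is the necessity of the uniformity property: it relies on the implicit assumption that the cipher is decipherable, i.e.\ that $\mathfrak{e}(\cdot,k)$ is injective for each $k$, which is what upgrades $m\mapsto k_m$ to a bijection via the cardinality hypothesis. Without that assumption, one could not conclude that the common value of $\Pr(\mathfrak{K}=k_m)$ is spread across all of $\mathbb{K}$. Everything else is a routine consequence of counting in finite equicardinal sets combined with Bayes' rule.
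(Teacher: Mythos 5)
Your proof is correct and is the standard counting argument for Shannon's theorem; the paper itself gives no proof but defers to Section~3.1 of Bloch and Barros, which contains essentially this same argument (as does Stinson's treatment of the one-time pad). The one place where your phrasing is slightly loose is the step ``restricting $\mathbb{C}$ to its support \dots promotes $\phi_m$ to a bijection'': if some ciphertexts had zero probability, the support would be strictly smaller than $|\mathbb{K}|$ and surjectivity alone would not give injectivity, so you should state explicitly the standard assumption (harmless, since it holds whenever every key has positive probability and the cipher is decipherable) that every $\mathfrak{C}\in\mathbb{C}$ and every $\mathfrak{M}\in\mathbb{M}$ occurs with positive probability. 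You correctly identify decipherability---injectivity of $\mathfrak{e}(\cdot,k)$ for each fixed $k$---as the implicit hypothesis that makes $m\mapsto k_m$ a bijection in the uniformity step, which is the point most proofs silently use.
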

\begin{proof}
 The proof can be found in section 3.1 of \cite{bloch2011physical}.  
\end{proof}
We will use Theorem \ref{thm_shannon} to prove that our approach can achieve asymptotic perfect secrecy. 
To use this theorem, first we prove that for large enough values of $m$, the key $T_r$ is uniformly distributed.
\begin{lem}{\em
The asymptotic distribution of bits of coded files on the servers tend to uniform.
 }\label{lem_uniform_key}
\end{lem}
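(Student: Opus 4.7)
The plan is to reduce the lemma to a standard fact about random inner products in $\mathbb{F}_2^m$ and then show that the full-rank restriction imposed by the encoding scheme introduces only an exponentially small bias. First I would fix a server $\mathcal{N}_i$, a storage slot $j$, and a bit position $b \in \{1,2,\dots,M\}$. By the definition of the RLF encoding, the $b$-th bit of the stored codeword $c_j^i$ equals the $\mathbb{F}_2$ inner product $\mathbf{v}_j^i \cdot \mathbf{x}^{(b)}$, where $\mathbf{x}^{(b)} \in \mathbb{F}_2^m$ is the vector whose $k$-th entry is the $b$-th bit of $f_k$. Excluding the degenerate case $\mathbf{x}^{(b)}=\mathbf{0}$ (discussed below), the lemma reduces to showing that this inner product is asymptotically $\mathrm{Bernoulli}(1/2)$.

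Next I would analyze the conditional distribution of $\mathbf{v}_j^i$ given the encoding vectors already accepted in $\mathcal{N}_i$. Under the full-rank encoding rule, $\mathbf{v}_j^i$ is uniform on $\mathbb{F}_2^m \setminus W$, where $W$ is the linear span of at most $h-1$ previously accepted vectors; in particular $|W| \le 2^{h-1}$. Had $\mathbf{v}_j^i$ been uniform on all of $\mathbb{F}_2^m$, the hyperplane $H=\{\mathbf{v}\in\mathbb{F}_2^m : \mathbf{v}\cdot\mathbf{x}^{(b)}=0\}$ would have cardinality exactly $2^{m-1}$ (since $\mathbf{x}^{(b)}\ne\mathbf{0}$), yielding $\Pr\{\mathbf{v}_j^i\cdot\mathbf{x}^{(b)}=0\}=1/2$ exactly. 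Since $H\cap W$ is a subspace of $W$, its size is either $|W|$ or $|W|/2$, so
\begin{align*}
\Pr\{\mathbf{v}_j^i\cdot\mathbf{x}^{(b)}=0 \mid \mathbf{v}_j^i\notin W\}=\frac{2^{m-1}-|H\cap W|}{2^m-|W|},
\end{align*}
and a short calculation shows this quantity differs from $1/2$ by at most $|W|/(2(2^m-|W|)) \le 2^{h-2}/(2^m-2^{h-1})$. Averaging over the random choice of $W$ preserves this bound, so provided $h=o(m)$ the bias vanishes as $m\to\infty$, and the marginal law of each bit of each coded file converges to $\mathrm{Bernoulli}(1/2)$.

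The main obstacle I anticipate is a clean treatment of the case $\mathbf{x}^{(b)}=\mathbf{0}$, in which the corresponding bit of the codeword is deterministically zero rather than uniform. This can be handled either by imposing the mild non-triviality condition that not all $m$ contents share the same bit at a given position, or by noting that under any symmetric content model the fraction of positions with $\mathbf{x}^{(b)}=\mathbf{0}$ is at most $2^{-m}$, which is negligible for large $m$. The rest of the proof is elementary counting in $\mathbb{F}_2^m$ together with the bound above on the conditional bias introduced by the full-rank restriction; no further technical difficulty is expected.
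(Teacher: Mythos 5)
Your argument is correct but takes a genuinely different route from the paper's. The paper's proof (omitted in the text and referred to the companion reference, but present in an earlier draft) keeps the encoding vector uniformly random on all of $\mathbb{F}_2^m$ and instead models the bits of the contents themselves as independent Bernoulli$(p_l^k)$ variables; it then computes $\Pr[c_{i,j}^k = 0] = \tfrac{1}{2}\bigl(1 + \prod_{l}(1 - p_l^k)\bigr)$ via probability generating functions and lets the product die as $m\to\infty$. You instead condition on the contents, reducing the coded bit to the $\mathbb{F}_2$ inner product $\mathbf{v}_j^i\cdot\mathbf{x}^{(b)}$ with a fixed nonzero $\mathbf{x}^{(b)}$, and make the uniformity a hyperplane-counting statement. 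Your version is actually more faithful to the scheme as defined, since you correctly account for the full-rank conditioning $\mathbf{v}_j^i \notin W$ and quantify the resulting bias as at most $2^{h-2}/(2^m-2^{h-1})$ — the paper's proof silently treats the coordinates of $\mathbf{v}_j^i$ as i.i.d.\ uniform and so never addresses this. The trade-off is that your bias bound only vanishes when $m-h\to\infty$; the paper's probabilistic file model needs no explicit constraint on $h$, but it does require $\inf_l p_l^k > 0$, and your $\mathbf{x}^{(b)}=\mathbf{0}$ caveat is the exact deterministic analogue of that condition, so both proofs carry an equivalent non-degeneracy hypothesis. Net, your argument is sound and quantitatively sharper on the encoding side, at the modest cost of a restriction on $h$ that the paper's proof sidesteps only by not modeling the full-rank selection at all.
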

\begin{proof}
The proof is skipped due to page limitations. A similar proof appears in \cite{kiskani2017asecure}.
\end{proof}
This lemma paves the way to prove the following theorem. 
\begin{thm}{\em 
For the proposed full rank encoding scheme if $m$ is large but $m <2^h$, then the proposed encoded strategy provides asymptotic perfect secrecy against any eavesdropper which is capable of wiretapping all but one of the links from the servers to a user in a RG.
}\label{thm_secrecy}
\end{thm}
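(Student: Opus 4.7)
My plan is to invoke Theorem \ref{thm_shannon} with the identification $\mathfrak{M} \leftrightarrow f_r$, $\mathfrak{K} \leftrightarrow T_r$, and $\mathfrak{C} \leftrightarrow S_r$, with encoding map $\mathfrak{e}(\mathfrak{M},\mathfrak{K}) = \mathfrak{M} + \mathfrak{K}$ in $\mathbb{F}_2$, which is exactly the structure of equation \eqref{eqs_security_eq1}. Since $f_r,T_r,S_r$ all live in $\mathbb{F}_{2^M}$, the prerequisite $|\mathbb{M}|=|\mathbb{K}|=|\mathbb{C}|$ is satisfied with all three alphabets of size $2^M$. The hypothesis $m < 2^h$ enters here as the guarantee that the column space of the full-rank matrix $\mathbf{V}_{J_k}$ contains $2^h > m$ vectors, so that server $\mathcal{N}_{J_k}$ can genuinely contribute a non-trivial summand $T_r$ rather than a zero that would collapse \eqref{eqs_security_f_r} to $f_r = S_r$ and leak the content outright.

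Next I would verify the two bullet conditions of Theorem \ref{thm_shannon}. The first is immediate: for every pair $(f_r,S_r)$, the unique $T_r$ satisfying $S_r = f_r + T_r$ in $\mathbb{F}_2$ is $T_r = f_r + S_r$. The second condition, uniform distribution of the key, is the substantive step and is precisely where Lemma \ref{lem_uniform_key} is used. Observing that
\begin{equation*}
T_r = \mathbf{f}\mathbf{V}_{J_k}\mathbf{y}_r^{J_k} = \sum_{j=1}^{h} \left(\mathbf{y}_r^{J_k}\right)_j c_j^{J_k},
\end{equation*}
$T_r$ has the same form $\mathbf{f}\mathbf{v}$ as a coded file, with binary coefficient vector $\mathbf{v} = \mathbf{V}_{J_k}\mathbf{y}_r^{J_k}$. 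The probability-generating-function argument underlying Lemma \ref{lem_uniform_key} then applies bit by bit and drives $\mathbb{P}[T_r^k = 0] \to 1/2$ for each $k$ as $m \to \infty$. Combined with independence of the file bits across coordinates, this yields $T_r \sim \mathrm{Unif}(\mathbb{F}_{2^M})$ asymptotically, satisfying the second bullet of Theorem \ref{thm_shannon} and hence delivering asymptotic perfect secrecy.

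The main obstacle I anticipate is justifying the use of Lemma \ref{lem_uniform_key} with the forced coefficient vector $\mathbf{v} = \mathbf{V}_{J_k}\mathbf{y}_r^{J_k}$ in place of an i.i.d.\ uniform $\mathbf{v}$: strictly speaking, $\mathbf{y}_r^{J_k}$ is not freely drawn but is pinned down by the decoding constraint $\sum_i \mathbf{V}_i \mathbf{y}_r^i = \mathbf{e}_r$, so the randomness of $\mathbf{v}$ must be inherited from the unobserved random matrix $\mathbf{V}_{J_k}$. Here the hypothesis $m < 2^h$ returns: since the column space of $\mathbf{V}_{J_k}$ contains $2^h$ vectors and only $m < 2^h$ target vectors $\mathbf{e}_r$ need to be realizable, this forced $\mathbf{v}$ still has enough support, with high probability, to send the product $\prod_l (1-p_l^k)$ to zero as $m \to \infty$, which is what the proof of Lemma \ref{lem_uniform_key} genuinely requires. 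Making this inheritance of randomness rigorous is the most delicate part of the argument and is exactly what confines the conclusion to the asymptotic regime.
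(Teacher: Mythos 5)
Your proposal takes essentially the same route as the paper: the same Shannon-cipher identification $(\mathfrak{M},\mathfrak{K},\mathfrak{C})=(f_r,T_r,S_r)$ under XOR encoding, the trivial verification of the first bullet of Theorem \ref{thm_shannon} via $\mathfrak{K}=\mathfrak{M}+\mathfrak{C}$, and Lemma \ref{lem_uniform_key} for the second bullet. One small nuance on where $m<2^h$ enters: the column space of the full-rank $\mathbf{V}_{J_k}$ always has $2^h$ elements, so that count is not what the hypothesis changes; what the bound $2^h-1\ge m$ actually buys is enough \emph{distinct nonzero} local decoding vectors $\mathbf{y}_r^{J_k}$ to assign one to each of the $m$ base vectors $\mathbf{e}_r$ while still leaving $\mathbf{V}\mathbf{y}_r=\mathbf{e}_r$ solvable on the remaining servers when the RG is large enough---your ``nontrivial summand'' reading is a consequence of this, not the condition itself. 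The obstacle you flag at the end, that $\mathbf{v}=\mathbf{V}_{J_k}\mathbf{y}_r^{J_k}$ is pinned down by the decoding constraint rather than drawn i.i.d.\ uniform, is a genuine delicacy that the paper's own proof also elides (it merely asserts that a ``unique summation of such encoded files is uniformly distributed''); your instinct that the support of $\mathbf{v}$ must grow with $m$ so that the vanishing-product argument underlying Lemma \ref{lem_uniform_key} still applies is precisely where one would have to work to make that step fully rigorous.
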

\begin{proof}
We formulated this problem as a Shannon cipher system assuming that $\mathfrak{M}=f_r$, $\mathfrak{K}=T_r$, and $\mathfrak{C} = S_r$. The condition $m < 2^h$ ensures that a unique vector $\mathbf{y}_r^{J_k}$ exists for each requested message. Therefore, since full rank encoding scheme is used, then $\mathbf{V}_{J_k}$ will be full rank and $T_r$ guarantees that a unique key exists for each requested message $f_r$. Notice that if the size of the RG is large enough, then the unique choice of the key does not affect the solvability of the linear equation $\mathbf{V} \mathbf{y}_r = \mathbf{e}_r$. Therefore, for any pair $(\mathfrak{m}, \mathfrak{C}) \in (\mathbb{M}, \mathbb{C})$, a unique key $\mathfrak{K} \in \mathbb{K}$ exists such that $\mathfrak{C} = \mathfrak{m} + \mathfrak{K}$. Further, we are guaranteed to have $|\mathbb{M}|=|\mathbb{K}|=|\mathbb{C}|$.
 
Notice that the key  $\mathfrak{K}=T_r$ belongs to the set of all possible bit strings with $M$ bits. Lemma \ref{lem_uniform_key} proves that each encoded file is uniformly distributed among all $M$-bit strings. Hence each key which is a unique summation of such encoded files is uniformly distributed among the set of all $M$-bit strings. In other words, regardless of the distribution of the bits in files, $T_r$ can be any bit string with equal probability for large values of $m$. Therefore, the conditions in Theorem \ref{thm_shannon} are met and perfect secrecy is achieved. 
\end{proof}
\begin{rem}{\em
 In this paper, we have assumed that the decoding vector $\mathbf{y}_r$ and the encoding matrix $\mathbf{V}_i$ are computed during the data preloading phase securely. Therefore, the eavesdropper cannot decode this information on any of the servers or have any knowledge about the key $T_r$.
 }\label{rem_key_exchange}
\end{rem}
\begin{rem}{\em 
A naive approach to achieve perfect secrecy using the Shannon cipher system is to choose $m$ different keys from the set of uniform $M$-bit strings and store them and use them to encode the files. However, since the file size $M$ is very large, this requires a significant amount of storage space to store the keys on the trusted servers which doubles the required storage capacity. The important contribution of our approach is that users do not need to store the keys and yet perfect secrecy can still be achieved with the help of trusted servers.  
}\label{rem_double_storage}
\end{rem}
 
\section{Private Information Retrieval}  
 \label{sec_pir}
 In PIR, the goal is to provide conditions that when a user downloads the content $f_r$ with index $r \in \{1,2,\dots,m\}$, the content index remains a secret to all of the servers. This is desirable in applications like Peer-to-Peer networks and in situations where some servers may have been compromised by the adversary. To achieve PIR, users send queries to the servers and servers respond to users based on those queries. These queries should be designed in a way that reveal no information to the servers about the requested content index. To formally define the {\em information theoretic PIR}, let $R$ be a random variable denoting the requested content index and let $\mathcal{Q}_l$ be a subset of at most $l$ queries. We have the following definition. 
 
 \begin{mydef}{\em 
  A PIR scheme is capable of achieving perfect information theoretic PIR against $i$ colluding servers if for the set $\mathcal{Q}_l$ of all queries available to all of these servers and any number of contents we have  
  \begin{align}
   I(R;\mathcal{Q}_l) = 0
   \label{eqs_pir}
  \end{align}
  where $I(.)$ is the mutual information function.
  }\label{def_pir}
 \end{mydef}
 
 
\subsection{Random Query Generation}
 To achieve PIR, the user chooses a fixed $\epsilon > 0$ and sets $A^{\epsilon} \triangleq m+ \lceil \log_2(\frac{1}{\epsilon}) \rceil$. Then it picks $A^{\epsilon}$ query vectors from $\mathbb{F}_2^m$ uniformly at random and statistically independent of each other. These will be the set of random queries. Therefore, we will have a set $\mathcal{Q}^{\epsilon} = \{\mathbf{q}_1, \mathbf{q}_2, \dots, \mathbf{q}_{A^{\epsilon}}\}$ of i.i.d. random query vectors. In the following, we will prove that with a probability of at least $1-\epsilon$, these random vectors span the whole $m$-dimensional space of $\mathbb{F}_2^m$. The properties of random vectors that we have used for our coding technique, had been previously studied in \cite{kolchin1999random}. 
 \begin{thm}{\em
 Let $\mathbf{Q}$ be a matrix of size $m \times l$ whose elements are independent random variables taking the values 0 and 1 with equal probability and let $\rho_m(l)$ be the rank of the matrix $\mathbf{Q}$ in $\mathbb{F}_2$. Let $s \ge 0$ and $c$ be fixed integers, $c+s \ge 0$. If $m \to \infty$ and $l=m+c$, then 
 \begin{align}
  \mathbb{P}[\rho_m(l) &= m - s] \to \nonumber \\
  & 2^{-s(s+c)} \prod_{i=s+1}^{\infty} \left(1-\frac{1}{2^i} \right) \prod_{j=1}^{s+c} \left(1-\frac{1}{2^j} \right)^{-1}
 \end{align}
 where the last product equals 1 for $c+s=0$.
 }\label{thm_kolchin}
\end{thm}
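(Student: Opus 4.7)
The plan is to derive the exact finite-$m$ formula for $\mathbb{P}[\rho_m(l) = m-s]$ by a direct count and then pass to the limit $m \to \infty$. The key combinatorial input I would invoke is the classical identity that the number of $m \times l$ matrices over $\mathbb{F}_2$ of rank $r$ equals $\binom{l}{r}_2 \prod_{i=0}^{r-1}(2^m - 2^i)$, where $\binom{l}{r}_2$ denotes the Gaussian binomial coefficient in base $2$. A one-line justification: choose the $r$-dimensional row space (contributing $\binom{l}{r}_2$ options), then count $m$-tuples in that subspace whose span equals the full subspace; the latter count is the number of surjective linear maps $\mathbb{F}_2^m \to \mathbb{F}_2^r$, namely $\prod_{i=0}^{r-1}(2^m - 2^i)$. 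Dividing by the total $2^{ml}$ gives the exact probability.

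Substituting $r = m-s$ and $l = m+c$, the next step is to separate pure powers of $2$ from bounded factors of the form $(1 - 2^{-j})$. Factoring $2^m$ out of each term $2^m - 2^i$, and expanding $\binom{m+c}{s+c}_2 = \prod_{j=m-s+1}^{m+c}(2^j - 1)\big/\prod_{j=1}^{s+c}(2^j - 1)$ by a similar factorization, the probability decomposes into four contributions: an overall power of $2$ whose exponent is quadratic in $m$; a finite product $\prod_{j=m-s+1}^{m+c}(1 - 2^{-j})$ of $s+c$ factors, each tending to $1$; a tail product $\prod_{j=s+1}^{m}(1 - 2^{-j})$ which converges to $\prod_{j=s+1}^{\infty}(1 - 2^{-j})$ by absolute convergence; and an $m$-independent constant $\prod_{j=1}^{s+c}(1 - 2^{-j})^{-1}$.

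The remaining work is bookkeeping of the exponent of $2$: one must check that the quadratic contribution $m(m-s)$ from the $\prod(2^m - 2^i)$ factor, together with $(s+c)(m-s)$ from the Gaussian binomial numerator, the triangular exponent $-(s+c)(s+c+1)/2$ from its denominator, the exponent $\sum_{j=m-s+1}^{m+c} j$ pulled from its numerator, and the normalizer $-m(m+c)$, combine to exactly $-s(s+c)$. This cancellation is the only genuinely delicate step: several quadratic-in-$m$ terms must vanish cleanly to leave only a $(s,c)$-dependent constant. Once this arithmetic is verified, the limits of the three bounded products are routine and one recovers the claimed identity, with the standard convention that the empty product $\prod_{j=1}^{0}(1 - 2^{-j})^{-1} = 1$ precisely when $s+c = 0$.
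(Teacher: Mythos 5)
The paper does not actually prove this theorem: it defers entirely to Theorem 3.2.1 of Kolchin's \emph{Random Graphs}, which is simply cited. Your argument is therefore a genuinely different route --- a self-contained derivation from the classical enumeration of fixed-rank matrices over a finite field --- and I have checked that it goes through. The count $\binom{l}{r}_2 \prod_{i=0}^{r-1}(2^m - 2^i)$ of $m \times l$ rank-$r$ matrices over $\mathbb{F}_2$ is correct, your one-line justification is right, and after substituting $r=m-s$, $l=m+c$ and writing $\binom{m+c}{s+c}_2 = \prod_{j=m-s+1}^{m+c}(2^j-1)\big/\prod_{j=1}^{s+c}(2^j-1)$, the four exponents $m(m-s)$, $\sum_{j=m-s+1}^{m+c} j$, $-(s+c)(s+c+1)/2$, and $-m(m+c)$ do sum to $-s(s+c)$, while the bounded factors $\prod_{j=s+1}^{m}(1-2^{-j})$, $\prod_{j=m-s+1}^{m+c}(1-2^{-j})$, and $\prod_{j=1}^{s+c}(1-2^{-j})^{-1}$ converge to the three products in the statement. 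What citing Kolchin buys is brevity; what your derivation buys is transparency and an exact finite-$m$ formula from which the limit visibly emerges.

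One slip in the exponent bookkeeping: you list five contributions, and the extra $(s+c)(m-s)$ term does not belong to the factorization you set up. It arises instead from the alternative writing $\binom{m+c}{s+c}_2 = \prod_{i=0}^{m-s-1}(2^{m+c}-2^i)\big/\prod_{i=0}^{m-s-1}(2^{m-s}-2^i)$, where one pulls $2^{m+c}$ out of each numerator factor and $2^{m-s}$ out of each denominator factor. Either decomposition works on its own (both give $-s(s+c)$), but mixing the two double-counts the Gaussian-binomial contribution, and the raw sum of all five terms you wrote is not $-s(s+c)$. Since you explicitly flag this as the arithmetic ``to be verified'' and your setup is otherwise correct, this is a slip in the outline rather than a hole in the argument, but you should commit to one factorization before turning the outline into a proof.
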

\begin{proof}
 This is Theorem 3.2.1 in page 126 of \cite{kolchin1999random}.
\end{proof}
 \begin{corol}{\em
 For $l = m + c$ where $c \ge 0$, if $m \to \infty$ we have 
 \begin{align}
 \mathbb{P}[\rho(l) = m] \to  \prod_{i=c+1}^{\infty} \left(1 - \frac{1}{2^i} \right) 
 \label{eqs_fullrank_corol}
\end{align}
 }\label{corol_kolchin}
 \end{corol}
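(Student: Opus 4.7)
The plan is to derive the corollary as an immediate specialization of Theorem~\ref{thm_kolchin} to the case $s=0$. Full rank of the $m\times l$ matrix $\mathbf{Q}$ means $\rho_m(l)=m$, i.e.\ rank deficiency $s=0$. Since the hypothesis $c+s\ge 0$ reduces to $c\ge 0$, which is given, the theorem applies.

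Substituting $s=0$ into the limiting expression
\[
  \mathbb{P}[\rho_m(l)=m-s] \to 2^{-s(s+c)} \prod_{i=s+1}^{\infty}\!\left(1-\tfrac{1}{2^i}\right) \prod_{j=1}^{s+c}\!\left(1-\tfrac{1}{2^j}\right)^{-1}
\]
makes the prefactor $2^{-s(s+c)}$ collapse to $2^0=1$. The remaining two products are then
\[
  \prod_{i=1}^{\infty}\!\left(1-\tfrac{1}{2^i}\right)\,\Big/\,\prod_{j=1}^{c}\!\left(1-\tfrac{1}{2^j}\right),
\]
and after cancelling the first $c$ factors of the numerator against the denominator one obtains $\prod_{i=c+1}^{\infty}(1-2^{-i})$, which is the stated limit.

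For the boundary case $c=0$ I would note that the convention adopted in the statement of Theorem~\ref{thm_kolchin} (``the last product equals $1$ for $c+s=0$'') makes the denominator an empty product equal to $1$, so the formula specializes consistently to $\prod_{i=1}^{\infty}(1-2^{-i})$, as expected for a square random $\{0,1\}$ matrix. There is no real obstacle here: the corollary is a purely algebraic consequence of the theorem, and the only thing requiring any care is the handling of the empty product when $c=0$ and the simple telescoping identification of $\prod_{i=1}^{\infty}/\prod_{j=1}^{c}$ with $\prod_{i=c+1}^{\infty}$.
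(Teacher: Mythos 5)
Your proof is correct and is exactly the paper's approach: the paper simply states that the corollary follows by setting $s=0$ in Theorem~\ref{thm_kolchin}, and you have spelled out the same substitution together with the cancellation of the first $c$ factors and the empty-product convention at $c=0$.
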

\begin{proof}
 The proof follows for $s=0$ in Theorem \ref{thm_kolchin}.
\end{proof}
In the following, we will use these results for our proofs. 
  \begin{mydef}{\em 
  We define the random variable $A$ as the minimum number of random query vectors $\mathbf{q}_1, \mathbf{q}_2, \dots, \mathbf{q}_i$ to span the whole space of $\mathbb{F}_2^m$.
  }\label{def_2}
 \end{mydef}
 
 \begin{lem}{\em
  The probability of the event that $A < m$ is zero and for any $c \ge 0$ we have 
  \begin{align}
 \mathbb{P}[A \le m+c] \to \prod_{i=c+1}^{\infty} \left(1 - \frac{1}{2^i} \right) 
 \label{eqs_prob_a}
\end{align}
  }\label{lem_kolchin}
 \end{lem}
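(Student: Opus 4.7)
The plan is to reduce both claims to Corollary \ref{corol_kolchin} by interpreting $A$ as the hitting time at which the rank of the random matrix whose columns are the successive queries first reaches $m$.

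For the first part, I would note the elementary fact that any set of fewer than $m$ vectors can span at most an $(m-1)$-dimensional subspace of $\mathbb{F}_2^m$, so by Definition \ref{def_2} the event $\{A < m\}$ is empty and thus has probability zero. This step is just bookkeeping and requires no probabilistic argument.

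For the second part, the key observation I would make explicit is the event identity
\begin{align}
\{A \le m+c\} = \{\rho_m(m+c) = m\},
\end{align}
where $\rho_m(l)$ denotes the rank over $\mathbb{F}_2$ of the $m \times l$ random matrix $\mathbf{Q}$ whose columns are $\mathbf{q}_1,\dots,\mathbf{q}_l$. The inclusion $\subseteq$ follows because once $A$ vectors span $\mathbb{F}_2^m$, any superset of them still spans, so the rank of the first $m+c \ge A$ columns is $m$. The reverse inclusion $\supseteq$ is immediate: if the first $m+c$ columns have rank $m$, then some subset of at most $m+c$ of them forms a basis, and since $A$ is the minimum number needed, $A \le m+c$. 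Because the queries $\mathbf{q}_i$ are i.i.d.\ uniform on $\mathbb{F}_2^m$, the matrix $\mathbf{Q}$ has independent Bernoulli($1/2$) entries and therefore satisfies the hypotheses of Corollary \ref{corol_kolchin}. Applying that corollary with $l = m+c$ gives
\begin{align}
\mathbb{P}[A \le m+c] = \mathbb{P}[\rho_m(m+c) = m] \to \prod_{i=c+1}^{\infty}\left(1-\frac{1}{2^i}\right),
\end{align}
as claimed.

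There is no real obstacle here, since the heavy lifting was done in Theorem \ref{thm_kolchin} / Corollary \ref{corol_kolchin}. The only point that deserves care is justifying the event identity above cleanly, so that the stopping-time description of $A$ is translated into a rank statement about a fixed-size random matrix; once that equivalence is stated, the rest is a direct substitution.
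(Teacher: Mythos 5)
Your proof is correct and follows essentially the same route as the paper, which simply cites Corollary \ref{corol_kolchin}; you make the implicit event identity $\{A \le m+c\} = \{\rho_m(m+c)=m\}$ explicit, which is the right bookkeeping step. One small point: in the reverse inclusion you needn't pass through ``some subset forms a basis''---since $A$ is the stopping time of the sequence $\mathbf{q}_1,\mathbf{q}_2,\dots$, rank $m$ of the first $m+c$ columns already says the first $m+c$ vectors span $\mathbb{F}_2^m$, hence $A \le m+c$ directly.
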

 \begin{proof}
  This is a direct result of Corollary \ref{corol_kolchin}.
 \end{proof} 

 \begin{lem}{\em
 The probability of the event that $A=m+c$ is less than $2^{-c}$ for any $c \ge 0$.
 }\label{lem_prob_specific1}
\end{lem}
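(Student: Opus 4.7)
The plan is to derive $\mathbb{P}[A = m+c]$ as a telescoping difference from Lemma \ref{lem_kolchin} and then bound the resulting expression.

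First I would write
\begin{align}
\mathbb{P}[A = m+c] = \mathbb{P}[A \le m+c] - \mathbb{P}[A \le m+c-1],
\end{align}
which is valid for $c \ge 1$; the case $c=0$ is handled separately since $\mathbb{P}[A < m] = 0$ by Lemma \ref{lem_kolchin}, so $\mathbb{P}[A=m] = \mathbb{P}[A \le m] = \prod_{i=1}^{\infty}(1-2^{-i})$, which is strictly less than $1 = 2^{-0}$.

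For $c \ge 1$, I would substitute the asymptotic expression from Lemma \ref{lem_kolchin} and factor out the common tail of the product:
\begin{align}
\mathbb{P}[A = m+c] &= \prod_{i=c+1}^{\infty}\left(1-\frac{1}{2^i}\right) - \prod_{i=c}^{\infty}\left(1-\frac{1}{2^i}\right) \nonumber \\
&= \prod_{i=c+1}^{\infty}\left(1-\frac{1}{2^i}\right)\left[1 - \left(1-\frac{1}{2^c}\right)\right] \nonumber \\
&= \frac{1}{2^c}\prod_{i=c+1}^{\infty}\left(1-\frac{1}{2^i}\right).
\end{align}

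Since every factor $(1-2^{-i})$ for $i \ge c+1$ lies strictly in $(0,1)$, the infinite product is strictly less than one, which immediately yields $\mathbb{P}[A=m+c] < 2^{-c}$. I expect no real obstacle here; the only subtlety worth flagging is that the equality in Lemma \ref{lem_kolchin} is asymptotic in $m$, so the bound $\mathbb{P}[A=m+c] < 2^{-c}$ should be understood in the same asymptotic sense, matching the convention already used throughout Section \ref{sec_pir}.
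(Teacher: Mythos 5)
Your proof is correct and takes essentially the same approach as the paper: both compute the telescoping difference $\mathbb{P}[A=m+c]=F(c)-F(c-1)$ from Lemma~\ref{lem_kolchin} and bound the result. The only differences are cosmetic: you factor the infinite products directly to obtain $\mathbb{P}[A=m+c]=2^{-c}\prod_{i=c+1}^{\infty}(1-2^{-i})$ and use that the product is strictly below one, whereas the paper first derives the recursion $F(c)\to F(c-1)/(1-2^{-c})$, uses $F(c)\le 1$ to get $F(c-1)\le 1-2^{-c}$, and then bounds $\tfrac{F(c-1)}{2^c-1}\le 2^{-c}$; you also explicitly dispatch the $c=0$ case, which the paper leaves implicit.
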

\begin{proof}
  Let $F(c) \triangleq \mathbb{P}[A \le m+c]$. It is easy to verify from equation \eqref{eqs_prob_a} that for $m \to \infty$ we have 
 \begin{align}
 F(c) \to \frac{F(c-1)}{1-\frac{1}{2^c}}. 
 \label{eqs_fc_to_rc_1}
 \end{align}
 Since $F(c) \le 1$, from equation \eqref{eqs_fc_to_rc_1} we arrive at 
 \begin{align}
  F(c-1) \le 1 - 2^{-c}.
  \label{eqs_up_fc}
 \end{align}
 Hence,
 \begin{align}
  &\mathbb{P}[A = m + c] = F(c)-F(c-1) \to F(c-1) \left(\frac{1}{1-\frac{1}{2^c} }-1 \right) \nonumber \\ 
  &=F(c-1) \left(\frac{1}{1-\frac{1}{2^c} }-1   \right) = \frac{F(c-1)}{2^c-1} \le \frac{1-2^{-c}}{2^c-1}  = 2^{-c} \nonumber 
 \end{align}
\end{proof}
\begin{lem}{\em
 The probability of the event that $A \le m+c$ is at least $1-2^{-c}$ and at most $1-2^{-(c+1)}$ for any $c \ge 0$. i.e. 
 \begin{align}
  1-2^{-c} \le F(c) \le 1-2^{-(c+1)}
 \end{align}
 }\label{lem_prob_sum}
\end{lem}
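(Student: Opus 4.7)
The plan is to obtain the two bounds independently, using the tools already developed in Lemma \ref{lem_prob_specific1}. First I would note that the upper bound is essentially already in hand. In the proof of Lemma \ref{lem_prob_specific1}, equation \eqref{eqs_up_fc} establishes $F(c-1) \le 1 - 2^{-c}$ as an asymptotic consequence of the recursion $F(c) \to F(c-1)/(1 - 2^{-c})$ combined with $F(c) \le 1$. Shifting the index $c \mapsto c+1$ in \eqref{eqs_up_fc} immediately yields $F(c) \le 1 - 2^{-(c+1)}$, which is the required upper bound.

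For the lower bound $F(c) \ge 1 - 2^{-c}$, I would work with the tail probability. Since $\mathbb{P}[A < m] = 0$ by Lemma \ref{lem_kolchin}, the complement event $\{A > m+c\}$ admits the decomposition
\begin{align}
1 - F(c) = \mathbb{P}[A > m+c] = \sum_{k=c+1}^{\infty} \mathbb{P}[A = m+k].
\end{align}
Invoking Lemma \ref{lem_prob_specific1} term by term, each summand is bounded by $2^{-k}$, so
\begin{align}
1 - F(c) \le \sum_{k=c+1}^{\infty} 2^{-k} = 2^{-c},
\end{align}
which rearranges to $F(c) \ge 1 - 2^{-c}$.

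Combining the two directions gives the claimed sandwich inequality. I do not anticipate a real obstacle here: the upper bound is a direct rewriting of \eqref{eqs_up_fc} at a shifted index, and the lower bound is a routine geometric-series tail estimate built on the already-proved pointwise bound $\mathbb{P}[A = m+k] \le 2^{-k}$. The only subtlety worth flagging is that all relations are asymptotic in $m \to \infty$, inherited from Theorem \ref{thm_kolchin} and Corollary \ref{corol_kolchin}; so the inequalities in the statement should be read in the same asymptotic sense as in Lemma \ref{lem_prob_specific1}.
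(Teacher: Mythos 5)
Your proposal is correct and matches the paper's own argument essentially step for step: the upper bound is obtained by shifting the index in equation \eqref{eqs_up_fc}, and the lower bound follows from the tail sum $1-F(c)=\sum_{k\ge c+1}\mathbb{P}[A=m+k]\le\sum_{k\ge c+1}2^{-k}=2^{-c}$ using Lemma \ref{lem_prob_specific1}. Your remark about the asymptotic ($m\to\infty$) reading of the inequalities is a fair clarification that the paper leaves implicit.
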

\begin{proof}
The upper bound is already proved in equation \eqref{eqs_up_fc}. From Lemma \ref{lem_prob_specific1} we have, 
 \begin{align}
  F(c) &= \mathbb{P}[A \le m+c] = 1 - \mathbb{P}[A > m+c]  \nonumber \\
  &= 1 - \sum_{i=c+1}^{\infty} \mathbb{P}[m+i]\ge 1 - \sum_{i=c+1}^{\infty}2^{-i} = 1 - 2^{-c} \nonumber 
 \end{align}
\end{proof}  
\begin{thm}{\em
 With a probability of at least $1-\epsilon$, the set of random queries $\mathcal{Q}^{\epsilon} = \{\mathbf{q}_1, \mathbf{q}_2, \dots, \mathbf{q}_{A^{\epsilon}}\}$  where $A^{\epsilon} = m+ \lceil \log_2(\frac{1}{\epsilon}) \rceil$ spans the whole $m$-dimensional space of $\mathbb{F}_2^m$.
 }\label{thm_min_prob}
\end{thm}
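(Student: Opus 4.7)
The plan is to reduce the theorem directly to Lemma \ref{lem_prob_sum}, which already gives the lower bound $F(c) \ge 1-2^{-c}$ on the probability that the minimum number $A$ of random vectors needed to span $\mathbb{F}_2^m$ is at most $m+c$. The event ``$\mathcal{Q}^{\epsilon}$ spans $\mathbb{F}_2^m$'' is precisely the event $\{A \le A^{\epsilon}\}$, since $A$ is defined (Definition \ref{def_2}) as the minimum number of i.i.d.\ uniform random vectors from $\mathbb{F}_2^m$ needed to span the whole space, and the $\mathbf{q}_i$ are i.i.d.\ uniform.

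First I would set $c \triangleq \lceil \log_2(1/\epsilon) \rceil$, so that $A^{\epsilon} = m+c$. By the definition of the ceiling, $c \ge \log_2(1/\epsilon)$, which gives $2^{-c} \le \epsilon$. Next I would apply Lemma \ref{lem_prob_sum} with this choice of $c$ to obtain
\begin{align}
\mathbb{P}\!\left[\mathcal{Q}^{\epsilon} \text{ spans } \mathbb{F}_2^m\right]
= \mathbb{P}[A \le m+c] = F(c) \ge 1 - 2^{-c} \ge 1 - \epsilon,
\end{align}
which is the desired bound.

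There is essentially no obstacle here; the theorem is a direct corollary of Lemma \ref{lem_prob_sum} once the parameter $c$ is chosen correctly. The only subtlety worth noting explicitly is that the lemmas are stated in the asymptotic regime $m \to \infty$, so strictly speaking the conclusion is an asymptotic one: for every fixed $\epsilon>0$, as $m\to\infty$, the spanning probability is at least $1-\epsilon$. If a non-asymptotic version were desired, one could instead invoke the exact identity from Corollary \ref{corol_kolchin} and bound the infinite product $\prod_{i=c+1}^{\infty}(1-2^{-i})$ from below by $1-\sum_{i=c+1}^{\infty}2^{-i} = 1-2^{-c}$ using the standard Weierstrass-type inequality, giving the same bound directly.
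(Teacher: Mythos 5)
Your proof is correct and takes essentially the same route as the paper: both set $c=\lceil\log_2(1/\epsilon)\rceil$, identify the spanning event with $\{A\le m+c\}$, and invoke the lower bound $F(c)\ge 1-2^{-c}$ from Lemma~\ref{lem_prob_sum} together with $2^{-c}\le\epsilon$. Your observation that the statement is asymptotic in $m$ (inherited from Theorem~\ref{thm_kolchin}) is a fair caveat that the paper leaves implicit.
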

\begin{proof}
 From Lemma \ref{lem_prob_sum}, we have 
 \begin{align}
  \mathbb{P}[A \le A^{\epsilon} = m+ \lceil \log_2(\frac{1}{\epsilon}) \rceil ] \ge 1-2^{-\lceil \log_2(\frac{1}{\epsilon}) \rceil} \ge 1-\epsilon \nonumber
 \end{align}
This proves the theorem.   
\end{proof}
Theorem \ref{thm_min_prob} states that the probability of spanning the $m$-dimensional space can arbitrarily go to 1 provided that the number of random vectors increases logarithmically with $\frac{1}{\epsilon}$. For example, to span the $m$-dimensional space with a probability of at least $0.99$, it is enough to only have $m+7$ random vectors. Using these random query vectors, we can now show that even with a large number of colluding servers no information about the requested content index can be obtained. To prove this result, we need to prove some lemmas. 
  

  Let $\mathbf{Q}^{\epsilon} \triangleq [\mathbf{q}_1~ \mathbf{q}_2 ~\dots ~\mathbf{q}_{A^{\epsilon}}]$ be the matrix of size $m \times A^{\epsilon}$ whose columns are random query vectors. Matrix $\mathbf{Q}^{\epsilon}$ contains $A^{\epsilon}$ statistically independent random vectors. Let $B_{\mathbf{x}}^r$ be the event that for a specific vector $\mathbf{x} \in \mathbb{F}_2^{A^{\epsilon}}$ and a specific base vector $\mathbf{e}_r$, we have $\mathbf{Q}^{{\epsilon}} \mathbf{x} = \mathbf{e}_r$.
 \begin{lem}{\em 
 For any specific non-zero vector $\mathbf{x} \in \mathbb{F}_2^{A^{\epsilon}}$ we have 
 \begin{align}
  \mathbb{P}[B_{\mathbf{x}}^r] = \mathbb{P}[\mathbf{Q}^{\epsilon} \mathbf{x} = \mathbf{e}_r] = 2^{-m} \label{eqs_lem_fix_x}.
 \end{align}
  }\label{lem_fix_x}
 \end{lem}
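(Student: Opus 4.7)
The plan is to exploit two facts: (i) for a fixed non-zero coefficient vector $\mathbf{x}$, the matrix-vector product $\mathbf{Q}^{\epsilon}\mathbf{x}$ can be rewritten as a sum, in $\mathbb{F}_2$, of a non-empty subset of the independent uniform random columns $\mathbf{q}_1,\ldots,\mathbf{q}_{A^{\epsilon}}$, and (ii) the $\mathbb{F}_2$-sum of any non-empty collection of independent uniform random vectors in $\mathbb{F}_2^m$ is itself uniformly distributed on $\mathbb{F}_2^m$. Once (ii) is in hand, the event $\{\mathbf{Q}^{\epsilon}\mathbf{x}=\mathbf{e}_r\}$ is just the event that a uniform vector in $\mathbb{F}_2^m$ takes a particular value, which has probability $2^{-m}$, independently of which basis vector $\mathbf{e}_r$ we picked.

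Concretely, I would first write
\[
\mathbf{Q}^{\epsilon}\mathbf{x} \;=\; \sum_{i\,:\,x_i=1} \mathbf{q}_i,
\]
with the sum taken in $\mathbb{F}_2$, and note that the index set $S=\{i:x_i=1\}$ is non-empty because $\mathbf{x}\neq 0$. Pick any $i^{\star}\in S$, and condition on all of the other $\mathbf{q}_i$ for $i\in S\setminus\{i^{\star}\}$. Then the sum becomes $\mathbf{q}_{i^{\star}}+\mathbf{w}$ for a constant (conditioned) vector $\mathbf{w}\in\mathbb{F}_2^m$. Since $\mathbf{q}_{i^{\star}}$ is independent of the conditioning and uniform on $\mathbb{F}_2^m$, the translated vector $\mathbf{q}_{i^{\star}}+\mathbf{w}$ is also uniform on $\mathbb{F}_2^m$ (translation by a constant is a bijection on $\mathbb{F}_2^m$ and preserves uniformity). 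Marginalising over the conditioning, $\mathbf{Q}^{\epsilon}\mathbf{x}$ is uniform on $\mathbb{F}_2^m$, so
\[
\mathbb{P}\bigl[\mathbf{Q}^{\epsilon}\mathbf{x}=\mathbf{e}_r\bigr] \;=\; 2^{-m}.
\]

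There is no real obstacle here; the only thing to be careful about is to make sure the argument uses only the non-emptiness of the support of $\mathbf{x}$ (so that at least one independent uniform column enters the sum) and the independence and uniformity of the columns $\mathbf{q}_i$ assumed in the preceding construction. The identity of $\mathbf{e}_r$ plays no role, so the lemma in fact holds with $\mathbf{e}_r$ replaced by any fixed vector in $\mathbb{F}_2^m$.
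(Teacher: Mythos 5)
Your argument is correct. The paper reaches the same $2^{-m}$ by a more pedestrian route: it factors the event coordinate-by-coordinate using independence of the entries of the $\mathbf{q}_i$, and then proves by induction on $k=|\{i:x_i=1\}|$ that the $\mathbb{F}_2$-sum of $k$ independent uniform bits is uniform, giving $\prod_{l=1}^m \tfrac{1}{2}=2^{-m}$. You instead work at the vector level: fix one index $i^\star$ in the support of $\mathbf{x}$, condition on the other columns entering the sum so that $\mathbf{Q}^{\epsilon}\mathbf{x}=\mathbf{q}_{i^\star}+\mathbf{w}$ with $\mathbf{w}$ constant, and observe that translation by a constant is a bijection of $\mathbb{F}_2^m$, so $\mathbf{q}_{i^\star}+\mathbf{w}$ inherits the uniform distribution of $\mathbf{q}_{i^\star}$. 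This replaces the coordinate-wise induction with a single conditioning step, is arguably cleaner, and makes explicit a slightly stronger conclusion that the paper's proof also implicitly yields but does not state: $\mathbf{Q}^{\epsilon}\mathbf{x}$ is exactly uniform on $\mathbb{F}_2^m$, so the target $\mathbf{e}_r$ could be replaced by any fixed vector. Both proofs rely only on the non-emptiness of the support of $\mathbf{x}$ and the independence and uniformity of the columns, so the change of route is a matter of taste rather than of substance.
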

\begin{proof} 
Lets assume vector $\mathbf{x}$ has $k$ ones. 
  If $\mathbf{Q}^{\epsilon} \mathbf{x} = \mathbf{e}_r$, then $k$ vectors from the set of all vectors $\mathbf{q}_1, \mathbf{q}_2, \dots, \mathbf{q}_{A^{\epsilon}}$ are added together to create $\mathbf{e}_r$. Lets denote these vectors by $\mathbf{q}_{e_1}, \mathbf{q}_{e_2}, \dots, \mathbf{q}_{e_k}$. Let $q^{e_j}_{r}$ denote the $r^{th}$ element of vector $\mathbf{q}_{e_j}$.
 Since the vectors $\mathbf{q}_{e_1}, \mathbf{q}_{e_2}, \dots, \mathbf{q}_{e_k}$ are independent and their elements are also mutually independent, using binary summations in $\mathbb{F}_2$, we have 
 \begin{align}
  \mathbb{P}[B_{\mathbf{x}}^r] = \mathbb{P}[\mathbf{Q}^{\epsilon} \mathbf{x} = \mathbf{e}_r ] & = 
  \mathbb{P}[\sum_{j=1}^k q^{e_j}_{r}= 1] \prod_{\substack{l'=1\\l' \neq r}}^m
  \mathbb{P}[\sum_{j=1}^k q^{e_j}_{l'} = 0].
  \label{eqs_lem_fix_x_proof} 
 \end{align}
 We can easily prove that $\mathbb{P}[\sum_{j=1}^k q^{e_j}_{r} = 1] = \frac{1}{2}$.
To prove this, we can use induction on $k$. This equation is valid for the base case $k=1$. Assume that it is valid for $k-1$. We have 
  \begin{align}
  \mathbb{P}[\sum_{j=1}^k q^{e_j}_{r}= 1] &=  
  \mathbb{P}[q^{e_k}_{r} = 1]\mathbb{P}[\sum_{j=1}^{k-1} q^{e_j}_{r} = 0] \nonumber \\
  &+   \mathbb{P}[q^{e_k}_{r} = 0]\mathbb{P}[\sum_{j=1}^{k-1} q^{e_j}_{r} = 1] 
  = \frac{1}{2}  
  \nonumber 
 \end{align}
 Similarly, it is easy to prove that $ \mathbb{P}[\sum_{j=1}^k q^{e_j}_{l'} = 0] = \frac{1}{2}$.
 Hence, equation \eqref{eqs_lem_fix_x_proof} can be simplified to $ \mathbb{P}[B_{\mathbf{x}}^r] = \mathbb{P}[\mathbf{Q}^{\epsilon} \mathbf{x} = \mathbf{e}_r] = 2^{-m}$.
 \end{proof}
\begin{lem}{\em 
  The following inequalities hold for $1 \le j \le i$, 
  \begin{align}
  \frac{1}{i+1} 2^{i H(\frac{j}{i})} \le  \binom{i}{j} \le 2^{i H(\frac{j}{i})}
  \end{align}
 where $H(\alpha)$ denotes the binary entropy function, i.e. $H(\alpha) = -\alpha \log_2(\alpha) - (1 - \alpha) \log_2(1-\alpha)$.
}\label{lem_binom_entropy} 
\end{lem}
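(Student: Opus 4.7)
The plan is to exploit the binomial identity
\begin{align*}
1 = \left(\tfrac{j}{i} + \tfrac{i-j}{i}\right)^{\!i} = \sum_{k=0}^{i} \binom{i}{k} \left(\tfrac{j}{i}\right)^{\!k} \left(\tfrac{i-j}{i}\right)^{\!i-k},
\end{align*}
together with the observation that, by the definition of the binary entropy $H(\cdot)$, one has the algebraic identity
\begin{align*}
\left(\tfrac{j}{i}\right)^{\!j} \left(\tfrac{i-j}{i}\right)^{\!i-j} = 2^{-i H(j/i)}.
\end{align*}
With these in hand, both bounds reduce to a single estimate on one term of the binomial sum.

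For the upper bound, I would isolate the $k=j$ summand on the right-hand side of the expansion above and note that since every term is nonnegative and the total sum equals $1$, this summand is at most $1$. Substituting the entropy identity then yields $\binom{i}{j} \le 2^{i H(j/i)}$ after rearrangement.

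For the lower bound, the strategy is to show that, with $p = j/i$, the term $T_k \triangleq \binom{i}{k} p^k (1-p)^{i-k}$ is maximized at $k=j$. I would do this via the ratio test: compute $T_{k+1}/T_k = \frac{(i-k)\,j}{(k+1)(i-j)}$ and verify that this ratio is $\ge 1$ for $k < j$ and $\le 1$ for $k \ge j$, so that $T_j = \max_k T_k$. Since the $i+1$ nonnegative terms $T_0,\dots,T_i$ sum to $1$, the maximum is at least $\frac{1}{i+1}$, which after substituting the entropy identity gives $\binom{i}{j} \ge \frac{1}{i+1}\, 2^{i H(j/i)}$.

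The only mildly delicate point is handling the boundary cases where $j=0$ or $j=i$, since then $H(j/i)=0$ and one of the factors in the identity involves $0^0$; I would treat these separately by observing that $\binom{i}{0} = \binom{i}{i} = 1$ satisfies the bounds trivially. Otherwise the argument is a straightforward ratio-test maximization, and I do not anticipate a substantive obstacle beyond clean bookkeeping of the entropy identity.
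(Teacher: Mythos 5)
Your proof is correct and is the standard type-counting argument for these entropy bounds on the binomial coefficient (essentially Theorem 11.1.3 in Cover and Thomas); the paper itself does not reproduce a proof but simply cites MacKay's appendix, which uses the same idea. One small note: since the lemma restricts to $1 \le j \le i$, only the endpoint $j=i$ needs the convention $0^0 = 1$ and $0\log_2 0 = 0$, under which your identity and both bounds go through without a separate case.
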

\begin{proof}
The proof can be found in the appendix of \cite{DBLP:journals/tit/MacKay99}.
\end{proof}   
We are now ready to prove the following theorem which shows that accessing a significant number of random queries in $\mathcal{Q}^{\epsilon}$ cannot help in reconstructing any of the base vectors for large $m$.
 \begin{thm}{\em  
  Consider the set $\mathcal{Q}^{\epsilon} = \{\mathbf{q}_1, \mathbf{q}_2,\dots,\mathbf{q}_{A^{\epsilon}}\}$ of $A^{\epsilon} = m + \lceil \log_2(\frac{1}{\epsilon}) \rceil$ statistically independent random uniform query vectors. For large enough values of $m$ with probability arbitrarily close to 1, none of the base vectors exist in the span of any subset $\mathcal{Q}_l \subset \mathcal{Q}^{\epsilon}$ with cardinality of at most $l = \lfloor \delta m \rfloor$ where $\delta < 0.5$.
 }\label{thm_spy}
 \end{thm}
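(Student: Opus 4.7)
The plan is to reduce the bad event to a statement about low-weight solutions of a linear equation, and then apply a union bound together with Lemma \ref{lem_fix_x}. The key observation is the following equivalence: $\mathbf{e}_r$ belongs to the span of some subset of at most $l$ query vectors drawn from $\mathcal{Q}^{\epsilon}$ if and only if there exists a non-zero $\mathbf{x} \in \mathbb{F}_2^{A^{\epsilon}}$ of Hamming weight at most $l$ satisfying $\mathbf{Q}^{\epsilon} \mathbf{x} = \mathbf{e}_r$; the support of $\mathbf{x}$ selects the subset and its non-zero entries give the coefficients of the linear combination. This reformulation lets us union bound directly over coefficient vectors rather than over (subset, coefficient) pairs, which turns out to be crucial for getting the threshold up to $\delta < 1/2$.

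With the reformulation in hand, I would invoke Lemma \ref{lem_fix_x} to assert that for each fixed non-zero $\mathbf{x}$, $\mathbb{P}[\mathbf{Q}^{\epsilon} \mathbf{x} = \mathbf{e}_r] = 2^{-m}$, and then union bound over the $m$ base vectors and the $\sum_{k=1}^{l} \binom{A^{\epsilon}}{k}$ non-zero weight-at-most-$l$ vectors. Since $l = \lfloor \delta m \rfloor < A^{\epsilon}/2$ for large $m$ whenever $\delta < 1/2$, the binomial terms are monotone increasing in $k$ and the cumulative sum is at most $l\,\binom{A^{\epsilon}}{l}$. Plugging in the upper bound $\binom{A^{\epsilon}}{l} \leq 2^{A^{\epsilon} H(l/A^{\epsilon})}$ from Lemma \ref{lem_binom_entropy}, together with $A^{\epsilon} = m + O(\log(1/\epsilon))$ and $l/A^{\epsilon} \to \delta$, the overall failure probability becomes $2^{-m(1-H(\delta)) + o(m)}$. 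Because $\delta < 1/2$ forces $H(\delta) < 1$, this bound vanishes exponentially in $m$, which finishes the argument.

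The main obstacle, and the step where a first attempt is likely to get the exponent wrong, is choosing what to union bound over. The seemingly natural approach of summing over subsets of size $l$ and then over the $2^l - 1$ non-zero coefficient vectors supported inside each subset introduces a spurious factor of $2^l$, yielding the stronger and false requirement $H(\delta) + \delta < 1$; this is already violated near $\delta \approx 0.23$ and gives nothing at all as $\delta$ approaches $1/2$. Viewing the event through a single global coefficient vector $\mathbf{x} \in \mathbb{F}_2^{A^{\epsilon}}$ of bounded weight avoids this double count and is precisely what lets the threshold be pushed all the way to $\delta < 1/2$. One small bookkeeping point I would also handle carefully is the $o(m)$ error term: because $A^{\epsilon} - m$ is only logarithmic in $1/\epsilon$ and $H$ is continuous, the difference between $A^{\epsilon} H(l/A^{\epsilon})$ and $m H(\delta)$ is $o(m)$, hence absorbable into the exponent without affecting the conclusion.
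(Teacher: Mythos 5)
Your argument is correct and follows essentially the same route as the paper: reduce the spanning event to the existence of a low-weight solution $\mathbf{x}$ of $\mathbf{Q}^{\epsilon}\mathbf{x}=\mathbf{e}_r$, invoke Lemma~\ref{lem_fix_x} to get probability $2^{-m}$ for each coefficient vector, union bound over vectors of Hamming weight at most $l$, and close with the entropy bound of Lemma~\ref{lem_binom_entropy}. One place where your bookkeeping is actually tighter than the paper's: the paper's step~(d) asserts $\binom{A^{\epsilon}}{l}\le\binom{m}{l}$, which is backwards since $A^{\epsilon}>m$; you sidestep this by keeping $A^{\epsilon}$ in the binomial, observing that $A^{\epsilon}-m$ is a constant for fixed $\epsilon$, and absorbing the resulting $o(m)$ into the exponent, which is the correct repair and leaves the conclusion unchanged. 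Your additional explicit union bound over the $m$ base vectors (rather than arguing for a fixed $\mathbf{e}_r$ as the paper does) is also a harmless and sensible strengthening, since the per-base probability already decays exponentially in $m$.
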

 \begin{proof}
  Consider any base vector $\mathbf{e}_r$ and a non-zero vector $\mathbf{x} \in \mathbb{F}_2^{A^{\epsilon}}$. For this vector, computing $\mathbf{Q}^{\epsilon} \mathbf{x}$ in $\mathbb{F}_2$ is equivalent to adding a subset of columns of $\mathbf{Q}^{\epsilon}$ whose set of indices is equal to the set of indices of non-zero elements in $\mathbf{x}$. If $\mathbf{Q}^{\epsilon} \mathbf{x} = \mathbf{e}_r$ for some $\mathbf{x} \in \mathbb{F}_2^{A^{\epsilon}}$, then any subset $\mathcal{Q}_l \subset \mathcal{Q}^{\epsilon}$ which contains the column vectors of $\mathbf{Q}^{\epsilon}$ whose set of indices is equal to the set of indices of non-zero elements in $\mathbf{x}$ also spans $\mathbf{e}_r$. In fact, the number of non-zero elements of $\mathbf{x}$ or Hamming weight of $\mathbf{x}$ (i.e., Ham($\mathbf{x}$)) is equal to the number of vectors that should be added to reconstruct $\mathbf{e}_r$.
    
  Consider all vectors $\mathbf{x} \in \mathbb{F}_2^{A^{\epsilon}}$ with Hamming weight less than or equal to $l= \lfloor \delta m \rfloor$ where $\delta < 0.5$. Lemma \ref{lem_fix_x} shows that for any $\mathbf{x}$, we have $\mathbb{P}[B_{\mathbf{x}}^r] = 2^{-m}$. Therefore, the asymptotic probability of existence of a subset $\mathcal{Q}_l \subset \mathcal{Q}^{\epsilon}$ with a cardinality of at most $l = \lfloor \delta m \rfloor$ which spans $\mathbf{e}_r$ for large values of $m$ can be found as 
  \begin{align}
   &\lim_{m \to \infty}  \mathbb{P}[\exists \mathcal{Q}_l \subseteq \mathcal{Q}^{\epsilon} | \textrm{card}\{\mathcal{Q}_l\} \le l  = \lfloor \delta m \rfloor,\mathbf{e}_r \in \textrm{span} \{ \mathcal{Q}_l\}] ,
   \nonumber \\
   =&\lim_{m \to \infty}  \mathbb{P}[\bigcup_{\substack{\mathbf{x} \in \mathbb{F}_2^{A^{\epsilon}}, \textrm{Ham}(\mathbf{x}) \le l}} B_{\mathbf{x}}^r ]  \stackrel{(a)}{\le} \lim_{m \to \infty}  \sum_{\substack{\mathbf{x} \in \mathbb{F}_2^{A^{\epsilon}}, \textrm{Ham}(\mathbf{x}) \le l}} \mathbb{P}[B_{\mathbf{x}}^r],
   \nonumber \\
   \stackrel{(b)}{=}& \lim_{m \to \infty}  \sum_{i=1}^l \binom{A^{\epsilon}}{i} 2^{-m} \stackrel{(c)}{\le} \lim_{m \to \infty} 
    l \binom{A^{\epsilon}}{l} 2^{-m}, \nonumber \\
    \stackrel{(d)}{\le}& \lim_{m \to \infty}  l \binom{m}{l} 2^{-m} 
   = \lim_{m \to \infty} \lfloor \delta m \rfloor \binom{m}{\lfloor \delta m \rfloor} 2^{-m}, \nonumber \\
   \stackrel{(e)}{\le}& \lim_{m \to \infty}  \delta m 2^{mH\left(\frac{\lfloor \delta m \rfloor}{m} \right)} 2^{-m} \stackrel{(f)}{\le} 
   \lim_{m \to \infty}  \delta m 2^{-m (1-H(\delta))} \stackrel{(g)}{=} 0,
   \nonumber 
  \end{align}
  where inequality (a) comes from the {\em union bound} and (b) holds by using Lemma \ref{lem_fix_x} and counting all the vectors $\mathbf{x}$ with Hamming weight less than $l = \lfloor \delta m \rfloor$ and  inequality (e) comes from Lemma \ref{lem_binom_entropy}. Notice that (c), (d), (e) and (f) are only valid for cases when $\delta < 0.5$. This shows that the probability of existence of any desired base in the span of any subset of vectors with cardinality less than $\lfloor \delta m \rfloor$ goes to zero as $m$ grows if $\delta < 0.5$. 
 \end{proof}

 \begin{rem}{\em
 In practice the user generates enough number of random vectors  to span the whole $m$-dimensional space. Hence, it has a set $\mathcal{Q} = \{\mathbf{q}_1, \mathbf{q}_2, \dots, \mathbf{q}_A \}$ of $A \ge m$ total random vectors. Then it chooses a subset $\mathcal{Q}^{\textrm{full}} = \{\mathbf{q}_{t_1}, \mathbf{q}_{t_2}, \dots, \mathbf{q}_{t_m} \} \subseteq \mathcal{Q}$ of $m$ linearly independent vectors from them and use them as its query vectors. This way it is guaranteed that the $m$ queries will span the whole space of  $\mathbb{F}_2^m$ and any base vector $\mathbf{e}_r$ can be represented in terms of these independent query vectors as
 \begin{equation}
 \mathbf{e}_r = \sum_{k=1}^m d_k \mathbf{q}_{t_k}
 \label{eq_query_expansion}
 \end{equation}
 The following lemma shows that the average required number of random vectors to span $\mathbb{F}_2^m$ is very close to $m$ so in practice only a few number of random queries more than $m$ is needed to span $\mathbb{F}_2^m$.  
  }\label{rem_practical_query}
 \end{rem}
 \begin{lem}{\em
 If ${\bf q}_j$ is a random vector belonging to $\mathbb{F}_2^{{m}}$ with elements having uniform distribution, the average minimum number of vectors ${\bf q}_j$ to span the whole space of $\mathbb{F}_2^{{m}}$ equals  
 \begin{equation}
  \mathbb{E}_q = {m} + \sum_{i=1}^{m} \frac{1}{2^{i}-1}  =   {m} + \gamma,
\label{gamma}
 \end{equation}
 where $\gamma$ asymptotically approaches  the Erdős–Borwein constant ($\approx 1.6067$).
 }\label{lem_mohsen}
\end{lem}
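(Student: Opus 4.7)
The plan is to recognize this as a generalization of the coupon collector problem over $\mathbb{F}_2^m$, where the ``coupons'' to be collected are the ability to extend the current independent set by one dimension. I would proceed by conditioning on the current rank of the already-chosen query vectors.

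First, I would argue that when the user has already accumulated $k$ linearly independent random vectors, they span a subspace of $\mathbb{F}_2^m$ containing exactly $2^k$ vectors. Because each new query $\mathbf{q}_j$ is drawn uniformly from $\mathbb{F}_2^m$, the probability that it falls inside the current span is $2^k/2^m = 2^{k-m}$, and hence the probability that it is linearly independent of the previously chosen vectors is $1 - 2^{k-m}$. The number of additional trials required to extend the independent set from dimension $k$ to dimension $k+1$ is therefore a geometric random variable with success probability $1 - 2^{k-m}$, whose expectation equals $\frac{1}{1-2^{k-m}} = \frac{2^m}{2^m - 2^k}$.

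Next, I would use linearity of expectation over the $m$ successive extension phases (from rank $0$ up to rank $m$) to write
\begin{align}
\mathbb{E}_q = \sum_{k=0}^{m-1} \frac{1}{1 - 2^{k-m}} = \sum_{k=0}^{m-1} \frac{2^m}{2^m - 2^k}.
\nonumber
\end{align}
Substituting $i = m - k$, this simplifies to
\begin{align}
\mathbb{E}_q = \sum_{i=1}^{m} \frac{2^i}{2^i - 1} = \sum_{i=1}^{m} \left( 1 + \frac{1}{2^i - 1} \right) = m + \sum_{i=1}^{m} \frac{1}{2^i - 1},
\nonumber
\end{align}
which is precisely the claimed expression with $\gamma = \sum_{i=1}^m \frac{1}{2^i - 1}$.

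Finally, I would observe that letting $m \to \infty$ the tail sum converges absolutely (by comparison with the geometric series $\sum 2^{-i+1}$), and that the limit $\sum_{i=1}^{\infty} \frac{1}{2^i - 1}$ is by definition the Erdős–Borwein constant, numerically $\approx 1.6067$. I do not anticipate any real obstacle in this proof: the only subtlety is carefully justifying that ``minimum number to span'' is well modelled by the stepwise extension procedure, which follows because once a new vector falls inside the current span it contributes nothing to the rank and we simply keep sampling until one does not, so the stopping time to reach full rank $m$ is exactly the sum of the $m$ independent geometric waiting times described above.
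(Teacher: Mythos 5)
Your proof is correct and is the standard, natural argument for this fact: decompose the stopping time into the $m$ geometric waiting times for the rank to increase by one, note that the success probability $1-2^{k-m}$ depends only on the current rank $k$ (not on which particular $k$-dimensional subspace has been spanned), and apply linearity of expectation followed by the reindexing $i = m-k$. The paper itself does not give a proof but cites \cite{DBLP:journals/twc/KiskaniS17}; the coupon-collector-over-ranks decomposition you use is the canonical derivation of this identity, and your closing remark about why the stepwise model captures the stopping time is exactly the justification needed.
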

\begin{proof}
 The proof can be found in \cite{DBLP:journals/twc/KiskaniS17}.
\end{proof}
  \begin{rem}{\em
  Since $\mathcal{Q}^{\textrm{full}} \subseteq \mathcal{Q}$, if any vector  $\mathbf{e}_r$ does not exist in the span of any subset $\mathcal{Q}_l \subset \mathcal{Q}$, of $l$ random query vectors, it will not exist in the span of any subset $\mathcal{Q}_l \subset \mathcal{Q}^{\textrm{full}}$ of $l$ random query vectors in $\mathcal{Q}^{\textrm{full}}$ too. So, Theorem \ref{thm_spy}  remains valid for this choice of random queries too. This means that in practice, every base vector is guaranteed to exist in the span of the $m$ query vectors but none of the base vectors exist in the span of any subset $\mathcal{Q}_l \subset \mathcal{Q}$ with probability close to one if $l < \lfloor \delta m \rfloor$ for $\delta < 0.5$.
 }\label{rem_k1}
 \end{rem}
  
 \subsection{Responding to Queries}
 In this section, we assume that the user has chosen $m$ linearly independent random query vectors in $\mathcal{Q}^{\textrm{full}}$ and wants to download the $r^{th}$ content. Since $\mathcal{Q}^{\textrm{full}}$ is a set of vectors which spans the whole space of $\mathbb{F}_2^m$, the user can expand the base vector $\mathbf{e}_r$ in terms of the query vectors in $\mathcal{Q}^{\textrm{full}}$ as mentioned in \eqref{eq_query_expansion}. Hence, the requested content can be expanded in terms of query vectors as 
 \begin{align}
  f_r = \mathbf{f} \mathbf{e}_r = \mathbf{f} \left(\sum_{k=1}^m d_k  \mathbf{q}_{t_k} \right) &= \sum_{k=1}^{m} d_k \mathbf{f} \mathbf{q}_{t_k} 
  \label{eqs_reconstruction}
 \end{align} 
 where $d_k \in \mathbb{F}_2$ is either 0 or 1. Based on equation \eqref{eqs_reconstruction} the user requests some parts of the desired content from each RG so that none of the RGs can understand any information about the requested content.  
 
 To accomplish PIR, the user partitions the set of random queries $\mathbf{q}_{t_k}$ whose corresponding decoding gains $d_k$ are non-zero into $a$ disjoint subsets $\mathcal{Q}_1, \mathcal{Q}_2, \dots, \mathcal{Q}_a$. The choice of number of subsets (i.e. $a$) depends on the number of colluding servers. Each subset of queries is then sent to a different RG as depicted in Figure \ref{fig_query_response}. Therefore, the requested content can be retrieved as 
  \begin{align}
  f_r = \sum_{\substack{\mathbf{q}_{t_k} \in \mathcal{Q}^{\textrm{full}} \\d_k \neq 0}}  \mathbf{f} \mathbf{q}_{t_k} = \sum_{i=1}^a \sum_{\mathbf{q}_{t_k} \in \mathcal{Q}_i }  \mathbf{f} \mathbf{q}_{t_k}
  \label{eqs_reconstruction2}
 \end{align}
 The ultimate goal in PIR is to prevent any colluding group of servers to gain information about the requested content index. Assume that the number of colluding servers is $b$. If any two colluding servers lie within the same RG, they receive the same subset of queries from the user. Therefore, without loss of generality we consider the worst scenario in which all the colluding servers lie within different RGs and all these $b$ colluding servers are able to collaboratively obtain all the queries  $\mathcal{Q}_1, \mathcal{Q}_2, \dots, \mathcal{Q}_b$.  Based on Theorem \ref{thm_spy}, if the number of all query vectors in $\mathcal{Q}_l = \cup_{i=1}^b \mathcal{Q}_i$ is less than $\lfloor \delta m \rfloor$ for some $\delta < 0.5$, then no information can be achieved about the requested content index. This provides significant PIR capability for this technique.
   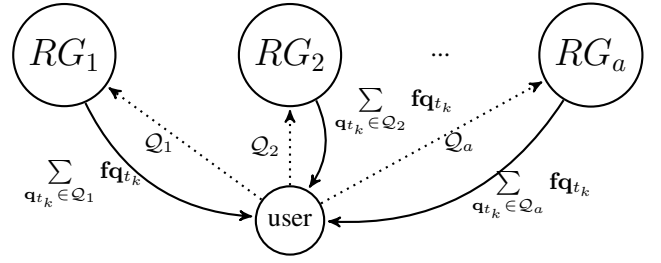
\begin{figure}
 \centering
 \begin{tikzpicture}[->,>=stealth',shorten >=1pt,auto,
                    thick,main node/.style={circle,draw,font=\sffamily\Large\bfseries}]

  \node[main node] (1) {$RG_1$};
  \node[main node] (2) [right of=1,node distance=3cm] {$RG_2$};
  \node (y) [right of=2, node distance = 2cm] {...};
  \node (x) [draw, circle, thick, below of=2, node distance=2.2cm] {user};
  \node[main node] (4) [right of=2, node distance=4cm] {$RG_a$};

  \path[every node/.style={font=\sffamily\small}]
    (1) edge [bend right] node [left] {$\sum\limits_{ \mathbf{q}_{t_k} \in \mathcal{Q}_1} \mathbf{f} \mathbf{q}_{t_k} $} (x)
    (2) edge [bend left] node [above right] {$\sum\limits_{ \mathbf{q}_{t_k} \in \mathcal{Q}_2} \mathbf{f} \mathbf{q}_{t_k} $} (x)
    (4) edge [bend left] node [right] {$\sum\limits_{ \mathbf{q}_{t_k} \in \mathcal{Q}_a} \mathbf{f} \mathbf{q}_{t_k} $} (x)
    (x) edge [dotted] node [left] {$\mathcal{Q}_1$} (1)
    (x) edge [dotted] node [left] {$\mathcal{Q}_2$} (2)
    (x) edge [dotted] node [right] {$\mathcal{Q}_a$} (4);
\end{tikzpicture}
\caption{Multiple RGs respond to queries sent from the user. This allows the
user to privately download its desired content.} 
\label{fig_query_response}
\end{figure}

 Notice that since RGs have full rank encoding matrices, they can respond to any query that they receive. Assume that RG $\mathcal{J}_i$ with the full rank encoding matrix $\mathbf{V}  = [\mathbf{V}_1 ~\mathbf{V}_2~ \dots ~\mathbf{V}_{J_i}]$ receives the set of queries $\mathcal{Q}_i$. This RG needs to send $ \sum_{ \mathbf{q}_{t_k} \in \mathcal{Q}_i} \mathbf{f} \mathbf{q}_{t_k}$ to the user. It can solve the linear equation 
 \begin{align}
  \mathbf{V} \mathbf{p}_i = \sum_{ \mathbf{q}_{t_k} \in \mathcal{Q}_i}  \mathbf{q}_{t_k}
  \label{eqs_lin_eq2}
 \end{align}
 in Galois Field $\mathbb{F}_2$ for $\mathbf{p}_i$ as 
  \begin{align}
  \mathbf{p}_i = \mathbf{V}^T \left(\mathbf{V} \mathbf{V}^T \right)^{-1} \left( \sum_{ \mathbf{q}_{t_k} \in \mathcal{Q}_i} \mathbf{q}_{t_k} \right).
  \label{eqs_lin_eq2_soln}
 \end{align}
 Similar to before the server $\mathcal{N}_s$ in the RG $\mathcal{J}_i$ which has already acquired all the information in matrix $\mathbf{V}$, computes the overal query decoding solution $\mathbf{p}_i$ which is a vector of size $J_i h \times 1$. If this vector is divided into $J_i$ equal size pieces as $\mathbf{p}_i = [\mathbf{p}_i^1~\mathbf{p}_i^2~\dots~\mathbf{p}_i^{J_i}]^T$, then the server $\mathcal{N}_s$ sends the $j^{th}$ portion of $\mathbf{p}_i$ to server $\mathcal{N}_j$ in the RG $\mathcal{J}_i$. More precisely, server $\mathcal{N}_j$ receives a query response vector $\mathbf{p}_i^j$ of size $h \times 1$ from $\mathcal{N}_s$ for each $j=1,2,\dots,J_i$. Then the server $\mathcal{N}_j$ sends $\mathbf{f} \mathbf{V}_j \mathbf{p}_i^j$ back to the coordinating server $\mathcal{N}_s$. The coordinating server $\mathcal{N}_s$ then aggregates all the data received from multiple servers in the RG to construct $\sum_{ \mathbf{q}_{t_k} \in \mathcal{Q}_i} \mathbf{f} \mathbf{q}_{t_k}$ as
 \begin{align}
  \sum_{ \mathbf{q}_{t_k} \in \mathcal{Q}_i} \mathbf{f} \mathbf{q}_{t_k} = \mathbf{f} \mathbf{V} \mathbf{p}_i = \sum_{j=1}^{J_i} \mathbf{f} \mathbf{V}_j \mathbf{p}_i^j.
  \label{eqs_reconstruction_group}
 \end{align}
 The coordinating server $\mathcal{N}_s$ in the RG $\mathcal{J}_i$ then transmits
 $\sum_{ \mathbf{q}_{t_k} \in \mathcal{Q}_i} \mathbf{f} \mathbf{q}_{t_k}$ to the user. 
 
 Each RG only transmits one encoded file to the user. However all the servers within an RG need to collaborate with each other prior to responding to the queries sent from the user. Notice that  communication between the servers are carried using high bandwidth fiber optic links while transmissions from the servers to the user  are performed over low bandwidth links. In our computation of communication cost for achieving PIR, we only consider 
 communication between the servers and the user in the low bandwidth links.   

 \begin{rem}{\em
It is worth mentioning that in this approach the coordinations between  servers in an RG is necessary because  the servers do not have full storage capacity to store all the contents. In fact if we also assume each server has high storage capacity similar to \cite{tajeddine2016private}, then each server can act as an RG and there will be no communications between servers.
  }\label{rem_full_size}
 \end{rem}

\subsection{Trade-off Between Communication Cost and Privacy Level}

In order to achieve PIR, each user needs to download more information. This additional bandwidth utilization is referred to as {\em communication Price of Privacy (cPoP)} \cite{tajeddine2016private} which is defined as follows. Note that the cost of sending queries are ignored because it is assumed that the size of contents are significantly higher than the size of the queries.
%
\begin{mydef}{\em
 The communication Price of Privacy (cPoP) is the ratio of the total number of bits downloaded by the user from the servers to the size of the requested file. 
 }\label{def_cpop}
\end{mydef}
 To explain the trade-off between communication cost and level of privacy, assume that the user divides the queries into $a$ equal size groups of queries and sends each group of queries to a different RG. Each RG should respond to at most $\lceil \frac{m}{a} \rceil$ queries. If $b$ RGs collude to gain some information about the requested content index, then they will have access to a total of at most $b \lceil \frac{m}{a} \rceil$ queries. We proved that knowing $\lfloor \delta m \rfloor$ queries asymptotically gives no information about the requested content index if $\delta < 0.5$. Hence, if $b < \frac{a}{2}$, then the colluding RGs will get no information about the requested content index. Therefore, if less than half of the RGs collude to gain some information about the requested content index, they cannot gain any information. We can increase $a$ to get the maximum possible level of privacy. However, the downside of increasing $a$ is that the communication Price of Privacy (cPoP) will also increase.

 As discussed earlier, if the queries are sent to $a$ RGs then $a$ responses from these RGs are required to retrieve a content. Since each RG transmits an encoded file of size $M$ bits to the user the total number of bits downloaded by the user will be equal to $aM$ and therefore the cPoP will be equal to $aM/M = a$. 

\subsection{Full Size Servers}
 
 Assume that the servers have large storage capability such that each RG is only composed of 1 server. Our assumption of full rank encoding scheme guarantees that servers with storage ability of $h \geq m$ encoded files can be used to retrieve any desired content. In \cite{tajeddine2016private}, the authors studied the use of MDS codes for PIR. They considered full size storage systems with MDS codes and they considered the case when only one of the databases is compromised.  They proposed a PIR technique in which a cPoP of $\frac{1}{1-R}$ can be achieved in full size databases where $R$ is the MDS code rate. To compare our results with \cite{tajeddine2016private}, notice that if we assume that there is only one malicious server in the cloud, then we can choose any two servers and send half of the queries to each one of them. This way we have a cPoP of 2 which is better than the results in \cite{tajeddine2016private} for $R > 1/2$. 
 
 \section{Simulation}
 \label{sec_sim}
 
 To numerically verify the results proved in section \ref{sec_pir}, we created $m$ linearly independent random query vectors which are used to expand the bases. Figure \ref{fig_queries} demonstrates the probability of the event that at least one of the base vectors exists in the span of $l = \lfloor \delta m \rfloor$ vectors for $\delta = 0.1,0.2,0.3$ and $0.4$. Consistent with our results in section \ref{sec_pir}, the probability of the event that a base exists in the span of any set of 
 $l = \lfloor \delta m \rfloor$ vectors goes quickly to zero. 
 
 It is proved  \cite{DBLP:journals/tit/Vardy97} that the problem of finding the minimum spanning set of vectors is NP-Complete. It is even proved \cite{DBLP:journals/tit/DumerMS03} that this problem is NP-Hard to approximate. Therefore, in general it is NP-Hard to find out if a given base exists in the span of at most $l = \lfloor \delta m \rfloor$ vectors out of the $m$ vectors. For our simulations we have used a brute force approach to check if a given base exists in the span of at most $l = \lfloor \delta m \rfloor$ vectors out of the $m$ existing random query vectors where $m \le 20$.   
 \begin{figure}
    \center
      \includegraphics[scale=0.4,angle=0]{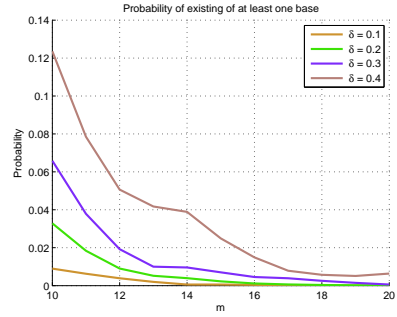}\\
      \caption{Probability of the event that at least one base exists in the span of any subset of $l=\lfloor \delta m \rfloor$ random vectors.}
    \label{fig_queries}
\end{figure}
 
 
\section{Conclusions}
\label{sec_conc}
In this paper, we have studied the problems of security and private information retrieval in distributed storage systems which are using a full rank encoding scheme based on Random Linear Fountain (RLF) codes. We have proposed an approach based on uniform random queries to achieve information theoretic PIR property. We have proved that our proposed technique can asymptotically achieve perfect secrecy for a distributed storage system. Our proposed solution is robust against a significant number of colluding servers in the network. We have also shown that our technique can outperform MDS codes for storage systems in terms of PIR cost for certain regimes. 
 
\bibliographystyle{unsrt}
\bibliography{PIR-Bib} 
\end{document}